\definecolor{darkgreen}{rgb}{0.0, 0.5, 0.13}
\newcommand{\subscript}[2]{$#1 _ #2$}
\DeclareMathOperator{\Tr}{Tr}
\DeclareMathOperator*{\argmin}{arg\,min}
\newcommand{\SV}{\mathcal{S}}
\newcommand{\energyBound}{{\bm{\omega}}}
\newcommand{\classicalSet}[1]{\mathcal{C}_{#1}}
\newcommand{\quantumSet}[1]{\mathcal{Q}_{#1}}
\newcommand{\mdlIneq}[1]{I_{#1,\delta}}
\newcommand{\prepBox}{P}
\newcommand{\msrmntBox}{M}
\newcommand{\prob}{{\bf p}}
\newcommand{\Hmin}{H_{\rm min}}
\newcommand{\N}{\mathcal{N}}
\renewcommand{\P}{\mathcal{P}}
\newcommand{\C}{\mathcal{C}}
\newcommand{\R}{\mathbb R}
\newcommand{\p}{\bm{p}}
\newcommand{\Ci}{C_i}
\newcommand{\CC}{\bm{\mathrm{C}}}
\newcommand{\ptr}[2]{\mathrm{Tr}_{#1}\left[#2\right]}
\newcommand{\id}{\mathcal{I}}
\newcommand{\epsmooth}{\epsilon_{s}}
\newcommand{\pr}[1]{\mathrm{Pr}\left[#1\right]}
\newcommand{\errV}{\epsilon_V}
\newcommand{\errK}{\epsilon_K}
\newcommand{\errW}{\epsilon_\Omega}
\newcommand{\Max}[1]{\mathrm{Max}[#1]}
\newcommand{\Min}[1]{\mathrm{Min}[#1]}
\newcommand{\expo}{\mathrm{e}}
\newtheorem{definition}{Definition}
\newtheorem{theorem}{Theorem}
\newtheorem{lemma}{Lemma}
\crefname{lemma}{Lemma}{Lemmas}
\newtheorem{proof-sketch}{Proof Sketch}
\begin{document}

\title{Semi-device-independent full randomness amplification based on energy bounds}
\author{Gabriel Senno}
\affiliation{ICFO-Institut de Ciencies Fotoniques, The Barcelona
Institute of Science and Technology, 08860 Castelldefels,
Barcelona, Spain}
\author{Antonio Ac\'in}
\affiliation{ICFO-Institut de Ciencies Fotoniques, The Barcelona
Institute of Science and Technology, 08860 Castelldefels,
Barcelona, Spain}
\affiliation{ICREA, Pg. Lluis Companys 23, 08010 Barcelona, Spain}

\begin{abstract}
Quantum Bell nonlocality allows for the design of protocols that amplify the randomness of public and arbitrarily biased Santha-Vazirani sources, a classically impossible task. Information-theoretical security in these protocols is certified in a device-independent manner, i.e. solely from the observed nonlocal statistics and without any assumption about the inner-workings of the intervening devices. On the other hand, if one is willing to trust on a complete quantum-mechanical description of a protocol's devices, the elementary scheme in which a qubit is alternatively measured in a pair of mutually unbiased bases is, straightforwardly, a protocol for randomness amplification. In this work, we study the unexplored middle ground. We prove that full randomness amplification can be achieved without requiring entanglement or a complete characterization of the intervening quantum states and measurements. Based on the energy-bounded framework introduced in [Van Himbeeck et al., Quantum 1, 33 (2017)], our prepare-and-measure protocol is able to amplify the randomness of any public Santha-Vazirani source, requiring the smallest number of inputs and outcomes possible and being secure against quantum adversaries.
\end{abstract}

\maketitle

\section{Introduction}
A randomness amplification protocol (RAP) takes bit strings from a \emph{single} min-entropy source, potentially correlated to an adversary, and produces fully private and random bit strings (i.e. uniform and indpendent of the adeversary's information) at its output. Santha-Vazirani (SV) sources \cite{santha1986generating}, a largely studied class of min-entropy sources, model processes in which bits $x_1,\dots,x_n$ are generated sequentially and where each bit $x_i$ can be correlated with all the preceding bits although not be completely determined by them, that is
\begin{align}\label{eq:sv-source}
\frac{1}{2}-\delta\leq p(x_i \mid x_1,\dots,x_{i-1})\leq \frac{1}{2}+\delta
\end{align}
for some $0\leq\delta<1/2$.
It is a well-known result in classical information theory that there is no \emph{deterministic} RAP for the class of SV sources \cite{santha1986generating}.

In a long line of research starting with Colbeck's and Renner's seminal work \cite{colbeck2012free}, quantum Bell nonlocality has been harnessed for the design of RAPs \cite{colbeck2012free,gallego2013full,brandao2016realistic,kessler2020device,foreman2020practical}. In these protocols, the input min-entropy source is used to choose the settings in a Bell test. If a Bell violation is observed, this \emph{certifies} that an arbitrarily random string can be extracted from the measurement outcomes; otherwise, the protocol is aborted.  This certification is \emph{device-independent} (DI), meaning that no assumption about the shared quantum state and local quantum measurements is made.

It is an elementary fact that entanglement and, hence, nonlocality are not needed to have randomness amplification in quantum theory. Namely, for a RAP that simply disregards the bits from the input min-entropy source and measures $\sigma_Z$ eigenstates in the eigenbasis of $\sigma_X$, the theory predicts a sequence of outcomes which is uniformly distributed and independent of anything else. In real implementations, however, states are never pure and measurements are never projective. This opens up the possibility of attacks by adversaries with some control of the additional degrees of freedom \cite{gerhardt2011full}. Therefore, a high level of trust in the characterization of the intervening quantum devices has to necessarily go into the security claims for such a \emph{device-dependent} RAP. On the other hand, it is also well-known that in the fully device-independent scenario, no randomness can be certified in prepare-and-measure setups, without entanglement. Naturally, this leads to the question: 
\begin{quote}
which, ideally minimal, set of assumptions allows for randomness amplification in a prepare-and-measure scenario?
\end{quote}
 The \emph{semi-device-independent} (semi-DI) paradigm allows to study precisely this kind of question. Semi-DI protocols \cite{pawlowski2011semi,zhou2015semi,brask2017megahertz,van2017semi,miklin2020semi,tavakoli2020semi} introduce some assumption about the intervening quantum states or measurements in order to lower the implementation's requirements while, at the same time, not having the pitfalls of a complete device-dependent protocol. 
 
In the semi-DI prepare-and-measure scheme introduced in \cite{van2017semi}, a bound on the energy or, more generally,
on the expectation value that a physically motivated observable takes on the otherwise uncharacterized prepared states is assumed. For example, in quantum optics setups such an observable could be the number of photons, the energy in some subset of the frequency modes, etc. In this work, we prove that randomness amplification can be certified in this semi-DI setting.

Our prepare-and-measure RAP, which results from porting the DI RAP of \cite{kessler2020device} to the semi-DI setting of \cite{van2017semi}, works for any public SV source \footnote{In this context, we say that a source $\SV$ is \emph{public} if, \textbf{after} manufacturing the protocol's devices, the adversary can have access to the bits produced by $\SV$ (i.e., the inputs to the device).} with bias $\delta<1/2$, producing bits with arbitrarily small bias, $\delta\to 0$, with the minimum number of inputs and outcomes possible and being secure against quantum adversaries. 

\paragraph{Related work.} The use of SV sources in a semi-DI setting was first considered in \cite{zhou2015semi}. The authors provide a randomness generation protocol based on a $2\to 1$ quantum random access code (QRAC) in which a qubit bound on the dimension of the prepared states is assumed. It is shown that fresh randomness can be generated if the inputs to the protocol are chosen with an SV source with bias $\delta < 0.1358$. In a subsequent work \cite{PhysRevA.94.032318}, the authors proved that a higher randomness generation rate can be achieved with a $3\to 1$ QRAC but at the expense of reducing the tolerated input bias to $\delta<0.103$. 

\medskip

This paper is organized as follows. First, we review the semi-DI framework based on energy bounds introduced in \cite{van2017semi}. Next, we describe the randomness amplification scenario using the said framework, and the assumptions that we make for the security proof. After that, we state the main technical contribution of this work: that nonzero conditional min-entropy can be certified in the energy-bounded semi-DI setting even if the input choices are taken from an arbitrarily bias SV source. Finally, we present our main result: a semi-DI protocol achieving full randomness amplification.


\section{The semi-DI framework of \cite{van2017semi}}\label{sec:semi-di-framework}
In the semi-DI framework introduced in \cite{van2017semi}, the basic setup comprises a preparation box $\prepBox$ with binary inputs and a measurement box $\msrmntBox$ with binary outputs. On input $x\in\{0,1\}$, $\prepBox$ prepares a quantum state $\rho^x$ and sends it to $\msrmntBox$ which performs some binary measurement $\{M_a\}_a$ on it, producing the output $a\in\{0,1\}$. The object of interest is the \emph{behaviour} $\prob_{A|X} = \{\prob_{A|x}\}_x$, i.e. the family of probability distributions of the output $A$ conditioned on the input $x$, for all $x\in\{0,1\}$. Clearly, if no assumption is made about $\prepBox$ or $\msrmntBox$, any behaviour $\prob_{A|X}$ can arise in such a setting. For example, $\prepBox$ can just send the input $x$ encoded in one of two orthogonal states and let $M$ locally sample the target $\prob_{A|x}$ after perfectly distinguishing between them.
Unlike the dimension bound of the first semi-DI protocols, in this framework the set of behaviours is restricted by introducing the assumption of a bound 
\begin{align}\label{eq:energy-bound}
\Tr{[H\rho^x]}\leq\omega_x
\end{align}
on the expectation value of some chosen observable $H$. The choice of observable may be motivated by a physical assumption on the setup, such as an energy bound on the prepared states. It is important to point out that there are no restrictions on the Hermitian operator $H$ other than having a nondegenerate smallest eigenvalue and a finite gap. These assumptions are quite natural in photonic setups, the ground (first excited) state being the vacuum (one-photon) state. W.l.o.g. we assume $0$ to be $H$'s smallest eigenvalue and $1$ the second to smallest. Intuitively, if both $\omega_0$ and $\omega_1$ are close to $0$, then both the prepared states $\rho^0$ and $\rho^1$ are close to $H$'s unique ground state and are, hence, hard to distinguish. For ease of reading, we will henceforth refer to $H$ as ``the energy''. Given an energy bound $\energyBound := (\omega_0,\omega_1)$, the thus allowed set of \emph{$\energyBound$-bounded behaviours} is 
\begin{align*}
\quantumSet{\energyBound}&:=\{\prob_{A|X}|\exists \{\rho^x\}_x,\{M_a\}_a \textrm{ s.t. } p(a|x)=\Tr[M_a\rho^x] \\
&\qquad\qquad\qquad\qquad   \textrm{ and } \Tr[H\rho^x]\leq \omega_x\}.
\end{align*}
%

Analogously to DI protocols, the successful execution of a semi-DI protocol in this framework is certified by the observation of a nonclassical behaviour. Much like in the Bell nonlocality setting, the classical behaviours $\classicalSet{\energyBound}\subseteq \quantumSet{\energyBound}$ in this semi-DI scenario are defined to be those which can be reproduced as a convex combination of deterministic behaviours, i.e. 
\begin{align}\label{eq:classical-set}
\classicalSet{\energyBound}&:=\{\sum_\lambda p(\lambda)\prob_{A|X,\lambda} \mid \exists~\prob_\Lambda,\{(\energyBound^\lambda,\prob_{A|X,\lambda})\}_{\lambda\in\Lambda} \textrm{ s.t. }  \nonumber\\
&\quad \prob_{A|X,\lambda} \in \{0,1\}^4\cap \mathcal{Q}_{\energyBound^\lambda} \textrm{ and }  \sum_\lambda p(\lambda)\omega^\lambda_{x}\leq\omega_x\}.
\end{align}
\section{Setting and assumptions for our randomness amplification protocol}\label{sec:setting-and-assumptions}
For our randomness amplification protocol, we consider a setting where we will use $n$ times in succession a public $\delta$-SV source $\SV_\delta$ (see Eq. \eqref{eq:sv-source}) and an untrusted device $D_\energyBound$, possibly manufactured by an adversary, Eve, made of two components: a preparation box $\prepBox_\energyBound$ and a measurement box $\msrmntBox$. 
Both the device and the source can depend on some classical side information $\lambda$ that the adversary holds. In particular, $\lambda$ can include all the bits (i.e. the history) produced by $\SV_\delta$ before Eve prepares $D_\energyBound$. Moreover, Eve can hold a quantum memory $E$ entangled with the states prepared by box $\prepBox_\energyBound$. During the execution of the protocol, $\SV_\delta$ produces the inputs $\mathbf{X}=X_1\dots X_n$ for the device which then, upon receiving the inputs,  produces the outputs $\mathbf{A}=A_1\dots A_n$. After the device has produced its outputs, the source produces another binary string $\mathbf{Z}=Z_1\dots Z_d$. See Fig. \ref{fig:setup} for a pictorial depiction of the setting.
\begin{figure}
     \centering
     \includegraphics[trim=1.5cm 5.5cm 1.5cm 1cm, clip,scale=0.3]{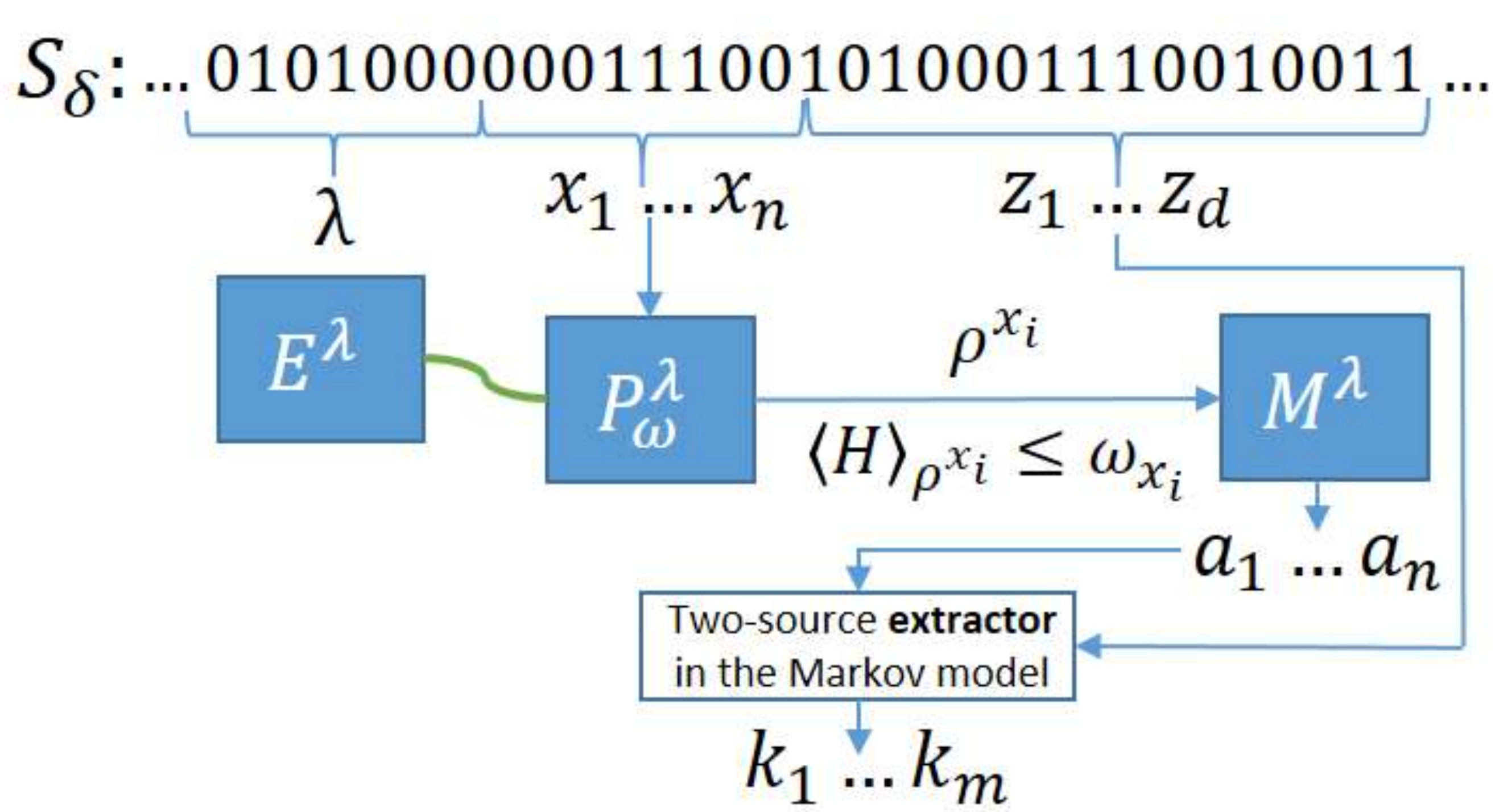}
     \caption{Pictorial depiction of the considered setting. Eve has classical side information $\lambda$ about the public SV-source $\SV_\delta$. She manufactures a prepare-and-measure device $D^\lambda_\energyBound=(P^\lambda_\energyBound,M^\lambda)$ and keeps a quantum memory $E^\lambda$ entangled with the $\energyBound$-bounded states prepared by $P_\energyBound$. A string of bits $\mathbf{x}\in\{0,1\}^n$ from $\SV_\delta$ is fed to the device, which produces $\mathbf{a}\in\{0,1\}^n$. Finally, $\mathbf{a}$ together with another string from $\SV_\delta$, $\mathbf{z}\in\{0,1\}^d$, are fed to a two-source extractor in the Markov model \cite{arnon2016quantum}, which outputs the final string $\mathbf{k}\in\{0,1\}^m$.\label{fig:setup}}
\end{figure}

We assume that:

\begin{enumerate}[label=(\subscript{A}{\arabic*})]
\item\label{assumption:avg-energy} There exists a Hermitian operator $H$ with lowest nondegenerate eigenvalue $0$ and unit gap such that for all $i\leq n$, $\mathbf{x}\in\{0,1\}^i$ and $\mathbf{a}\in\{0,1\}^{i-1}$, $$\Tr[H\rho^{x_i|x_1\dots x_{i-1},\mathbf{a},\lambda}]\leq \omega_{x_i},$$ with $\rho^{x_i|x_1\dots x_{i-1},\mathbf{a}}$ the state prepared by $P_\energyBound$ on round $i$ when $X_1\dots X_i=\mathbf{x}$ and $A_1\dots A_{i-1}=\mathbf{a}$.
\item\label{assumption:no-entanglement} There is no entanglement between $P_\energyBound$ and $M$.
\item The adversary only has classical side information, $\lambda$, about the SV source $\SV_\delta$.
\item\label{assumption:markov-chain} All dependence between $D_\energyBound$ and the source $\SV_\delta$ is contained in the adversary's side information. More formally, 
we assume that while the device produces outputs, it holds that $$I(A_1\dots A_{i-1}:X_i|X_1\dots X_{i-1}E,\lambda) = 0$$ and, after the device is done, it holds that $$I(\mathbf{Z}:\mathbf{A}|\mathbf{X}E,\lambda) = 0$$
with $I(\cdot : \cdot|\cdot)$ the conditional mutual information.
\end{enumerate}

Assumptions \ref{assumption:avg-energy} and \ref{assumption:no-entanglement} come from the framework in \cite{van2017semi}, where the former is referred to as \emph{max-average assumption}. Assumption \ref{assumption:markov-chain}, coming from the DI RAP in \cite{kessler2020device}, can be seen as the restriction that Eve does not have access to $D_\energyBound$ or $\SV_\delta$ once the protocol commences.

\section{Single-round min-entropy from an MDL-like inequality violation}\label{sec:single-round-min-entropy}
In this section, we show that for every bias $\delta$ of the input SV source and for every value $\lambda$ of the adversary's side information such that,
\begin{align*}
\frac{1}{2}-\delta \leq p(x|\lambda) \leq \frac{1}{2}+\delta
\end{align*}
there exist $\energyBound$-bounded behaviours $\prob_{A|X}$ for which the conditional min-entropy $H_\mathrm{min}(A|X,E,\lambda)$ is nonzero.

First of all, notice that implicit in the definition of the classical behaviours in Eq. \eqref{eq:classical-set} there is the assumption that the choice of preparation $x$ is independent of the shared randomness $\lambda$. In our randomness amplification scenario, however, the devices prepared by Eve can be correlated with the inputs given by the SV source and, hence, observation of a behaviour outside $\classicalSet{\energyBound}$ may not necessarily imply the nonexistence of a classical explanation.
For example, in the Bell nonlocality setting, if one allows the inputs $(x,y)\in\{0,1\}^2$ in a CHSH test to be correlated with the devices via some shared random variable $\Lambda$ such that $0.1465\lesssim p(x,y|\lambda)\lesssim 0.2845$, an observation of the Tsirelson bound can be classically explained \cite{sheridan2013bell}. The theory of \emph{measurement-dependent local} (MDL) distributions, introduced in \cite{putz2014arbitrarily,putz2016measurement}, was developed to study precisely this kind of classical explanations. Although originally conceived for Bell nonlocality scenarios, the analogous of MDL distributions can straightforwardly be defined in our prepare-and-measure setting. Concretely, the thus prescribed set of MDL-like distributions is 
\begin{align*}
\classicalSet{\energyBound}^\delta&:=\{\sum_\lambda p(\lambda)\prob_{AX|\lambda} \mid \frac{1}{2}-\delta\leq p(x|\lambda)\leq \frac{1}{2}+\delta \text{ and }\\
&\qquad\qquad\qquad\prob_{A|X,\lambda}\in\classicalSet{\energyBound}\}.
\end{align*}
As it follows from the results in \cite{putz2014arbitrarily,putz2016measurement}, 
$\classicalSet{\energyBound}^\delta$ is a polytope and, as in standard Bell scenarios, one can hence certify nonclassicality via the observation of an MDL-like inequality violation. 

The main technical contribution of this work, whose derivation we defer to Appendix \hyperref[appendix:mdl-ineq]{A}, is the following familiy of MDL-like inequalities, index by $\energyBound$ and $\delta$, for the energy-bounded prepare-and-measure scenario of \cite{van2017semi}:
\begin{lemma}[MDL-like inequality.]\label{lemma:mdl-ineq} For all $\prob_{AX} \in \classicalSet{\energyBound}^\delta$, it holds that
\begin{align}\label{eq:mdl-inequality}
\mdlIneq{\energyBound}(\prob_{AX}) := \mu_{\energyBound,\delta}\cdot p(a=x)-\frac{p(a\neq x)}{\mu_{\energyBound,\delta}}\geq B_{\energyBound,\delta},
\end{align}
with $B_{\energyBound,\delta}:=(1/2-\delta)(\mu_{\energyBound,\delta}+\frac{1}{\mu_{\energyBound,\delta}})(1-\omega_0-\omega_1)-\frac{1}{\mu_{\energyBound,\delta}}$ and $\mu_{\energyBound,\delta}:=(1/4-\delta^2)\omega_0\omega_1$.
\end{lemma}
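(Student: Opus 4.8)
The plan is to use that $\mdlIneq{\energyBound}$ is linear in the joint distribution $\prob_{AX}$ and that $\classicalSet{\energyBound}^\delta$ is a polytope, so its minimum over $\classicalSet{\energyBound}^\delta$ is attained at an extreme point; proving the inequality is then the same as solving a finite linear program. I would reduce to ``pure'' strategies indexed by $\lambda$, each fixing an extremal input bias $p(x=0\mid\lambda)\in\{1/2-\delta,1/2+\delta\}$, one of the four deterministic responses ($a=0$, $a=1$, $a=x$, $a=1\oplus x$), and an energy pair $\energyBound^\lambda=(\omega_0^\lambda,\omega_1^\lambda)$ making that response lie in $\quantumSet{\energyBound^\lambda}$, subject to the averaged budget $\sum_\lambda p(\lambda)\omega_x^\lambda\le\omega_x$.

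The crux, and what I expect to be the main obstacle, is the energy bookkeeping. The two constant responses cost nothing, since outputting a fixed bit is compatible with preparing the ground state, so $\energyBound^\lambda=(0,0)$ is admissible. Either correlated response $a=x$ or $a=1\oplus x$, in contrast, requires perfectly distinguishing $\rho^0$ from $\rho^1$ and hence orthogonal supports; using $H\ge 0$ with a nondegenerate ground state of energy $0$ and unit gap, I would show $\Tr[H\rho^0]+\Tr[H\rho^1]\ge 1$ for any orthogonal pair. Writing $P_0$ for the ground-state projector, $\Tr[H\rho^x]\ge\Tr[(\id-P_0)\rho^x]=1-\Tr[P_0\rho^x]$, while orthogonality forces $\Tr[P_0\rho^0]+\Tr[P_0\rho^1]\le 1$; adding the two bounds gives the claim. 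Thus every correlated response needs $\omega_0^\lambda+\omega_1^\lambda\ge 1$, and after averaging the total weight $r$ carried by correlated responses satisfies $r\le\omega_0+\omega_1$, this single inequality being exactly the condition under which both individual budgets can be met.

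Next I would tabulate each pure strategy's contribution $\mu_{\energyBound,\delta}\,p(a=x\mid\lambda)-p(a\neq x\mid\lambda)/\mu_{\energyBound,\delta}$. The correlated responses contribute $+\mu_{\energyBound,\delta}$ and $-1/\mu_{\energyBound,\delta}$ respectively, independently of the bias; minimizing each constant response over its bias interval gives the common value $(1/2-\delta)\mu_{\energyBound,\delta}-(1/2+\delta)/\mu_{\energyBound,\delta}$, reached by $a=0$ at bias $1/2-\delta$ and by $a=1$ at bias $1/2+\delta$. A one-line comparison then shows the anticorrelated contribution $-1/\mu_{\energyBound,\delta}$ is the smallest overall, undercutting the best constant by $(1/2-\delta)(\mu_{\energyBound,\delta}+1/\mu_{\energyBound,\delta})$.

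Finally, since $a=1\oplus x$ is the cheapest contribution and is constrained only by $r\le\omega_0+\omega_1$, the optimum (in the operative regime $\omega_0+\omega_1\le 1$) places weight $\omega_0+\omega_1$ on it and weight $1-\omega_0-\omega_1$ on the best constant response. The minimum value is then $-(\omega_0+\omega_1)/\mu_{\energyBound,\delta}+(1-\omega_0-\omega_1)\big[(1/2-\delta)\mu_{\energyBound,\delta}-(1/2+\delta)/\mu_{\energyBound,\delta}\big]$, and a short rearrangement using $1-(1/2+\delta)=1/2-\delta$ collapses this to $(1/2-\delta)(\mu_{\energyBound,\delta}+1/\mu_{\energyBound,\delta})(1-\omega_0-\omega_1)-1/\mu_{\energyBound,\delta}=B_{\energyBound,\delta}$, which proves the inequality and shows it is tight.
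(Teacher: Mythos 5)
Your proof is correct, and it takes a more self-contained route than the paper's. The paper's Appendix A quotes the vertex list of $\classicalSet{\energyBound}$ from \cite{van2017semi} (Eq.~\eqref{eq:vertices-local-set}), observes by inspection that every vertex satisfies $p(0|0,\lambda)+p(1|1,\lambda)\geq 1-\omega_0-\omega_1$, and then obtains $B_{\energyBound,\delta}$ in a three-line chain: rewrite $\mdlIneq{\energyBound}$ as $\sum_\lambda p(\lambda)\sum_x p(x|\lambda)\left(\mu_{\energyBound,\delta}+\frac{1}{\mu_{\energyBound,\delta}}\right)p(a=x|x,\lambda)-\frac{1}{\mu_{\energyBound,\delta}}$, bound $p(x|\lambda)\geq 1/2-\delta$ termwise, and invoke the vertex fact by linearity. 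You instead derive the energy bookkeeping from first principles --- your observation that a perfectly distinguishing deterministic strategy forces $\Tr[H\rho^0]+\Tr[H\rho^1]\geq 1$, via $H\geq \id-P_0$ for the ground-state projector $P_0$ and $\Tr[P_0\rho^0]+\Tr[P_0\rho^1]\leq 1$ for orthogonally supported states, is precisely the fact underlying the cited vertex list --- and then solve the resulting finite LP exactly. This buys you something the paper's chain of inequalities does not state explicitly: that the bound $B_{\energyBound,\delta}$ is actually attained (by weight $\omega_0+\omega_1$ on $a=1\oplus x$ and weight $1-\omega_0-\omega_1$ on a constant response at extremal bias), i.e.\ the inequality is tight; your algebra collapsing the optimum to $B_{\energyBound,\delta}$ checks out. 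Two benign points to tidy up: (i) you impose the energy budget globally, $\sum_\lambda p(\lambda)\omega_x^\lambda\leq\omega_x$, whereas the definition of $\classicalSet{\energyBound}^\delta$ nests a budget-respecting decomposition inside each value $\lambda$ of the side information; your feasible set is a relaxation (the per-$\lambda$ constraint implies the global one by averaging), so the lower bound remains valid, and since your optimizer can be realized with a single bias it also lies in the original set, so tightness survives; (ii) when $\omega_0+\omega_1>1$ (which $\energyBound\in[0,1]^2$ permits), the anticorrelated response can carry full weight and the true minimum is $-1/\mu_{\energyBound,\delta}$, which is still $\geq B_{\energyBound,\delta}$ because then $1-\omega_0-\omega_1<0$; one sentence covering this degenerate regime would make your proof complete for all $\energyBound$.
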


Lemma \ref{lemma:mdl-ineq} says that distributions $\prob_{AX}$ for which $\mdlIneq{\energyBound}(\prob_{AX})<B_{\energyBound,\delta}$ cannot be reproduced deterministic $\energyBound$-bounded behaviours correlated with the $\delta$-SV source. This implies that, in the semi-DI framework of \cite{van2017semi}, if the inputs to the preparation box $\prepBox_\energyBound$ are taken from a $\delta$-SV source, the observation of a distribution $\prob_{AX}$ violating Eq. \eqref{eq:mdl-inequality} certifies that there is some degree of intrinsic randomness in the measurement box $\msrmntBox$'s outcomes. 

Using standard techniques borrowed from the DI setting \cite{bancal2014more} together with the SDP characterization of the sets $\quantumSet{\energyBound}$ given in \cite[Thm. 1]{van2019correlations}, in Appendix \hyperref[appendix:min-entropy]{B} we derive SDP lower bounds 
\begin{align}\label{eq:min-entropy-lower-bounds}
H_\mathrm{min}(A|X,E,\lambda)\geq \eta (\mdlIneq{\energyBound}^*)
\end{align}
to the single-round conditional min-entropy as a function of the violation $\mdlIneq{\energyBound}^*$ of Eq. \ref{eq:mdl-inequality}. In Fig. \ref{fig:min-entropy-comb}, we plot the maximum conditional min-entropy that we can certify for a given bias $\delta\in[0,1/2)$ of the SV source, optimizing over $\energyBound\in[0,1]^2$ and violations of the corresponding $\mdlIneq{\energyBound}$. 
\begin{figure}[H]
     \centering
     \includegraphics[width=\columnwidth]{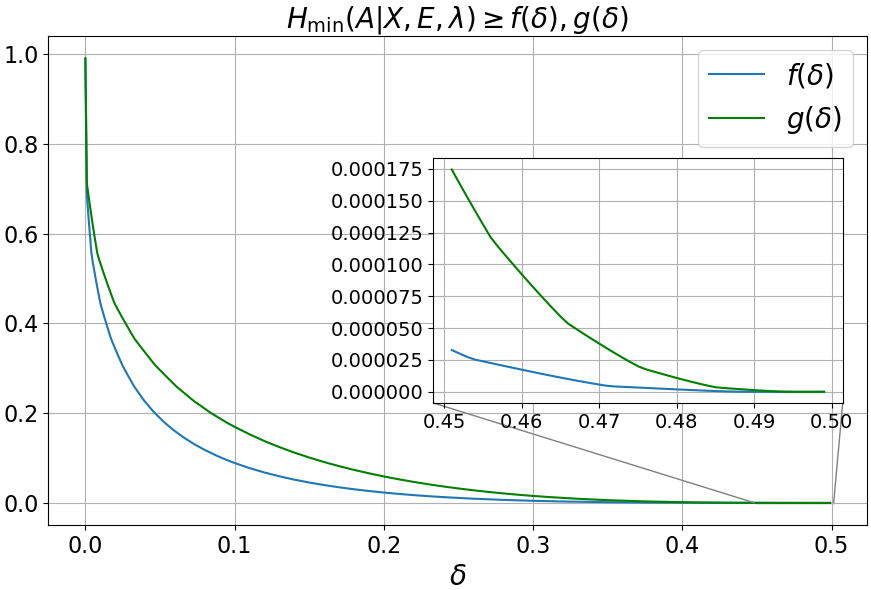}
\caption{Lower bounds to the single-round min-entropy certified by a violation of Eq. \ref{eq:mdl-inequality} as a function of the SV source's bias $\delta$ and for a suitable choice $\omega(\delta)$ of the ``energy'' bound in Eq. \ref{eq:energy-bound}. The {\color{blue}blue} curve (see Eq. \eqref{eq:min-entropy-lower-bound-general} in Appendix \hyperref[appendix:min-entropy]{B})  corresponds to the general bound. The {\color{darkgreen} green} lower bound (see Eq. \eqref{eq:min-entropy-lower-bound-uniform} in Appendix \hyperref[appendix:min-entropy]{B}) applies when, as it's customary (see e.g. \cite{colbeck2012free,zhou2015semi}), the observed distribution of the inputs is uniform, i.e. $\sum_\lambda p(x,\lambda)=1/2$.\label{fig:min-entropy-comb}}
\end{figure}
As expected, the amount of certifiable min-entropy decreases as the allowed correlation between the adversary and the source increases. Nevertheless, albeit very small, nonzero min-entropy can be certified even for $\delta\to 1/2$. On the other hand, if the adversary is uncorrelated with the source ($\delta=0$), one bit of min-entropy is reached.

\section{Randomness amplification protocol}\label{sec:protocol}
Our semi-DI RAP, given in Protocol \ref{protocol:rap}, is an adaptation of the DI RAP in \cite[Protocol 2]{kessler2020device} to our semi-DI scenario. It consists of two parts. In the first part, entropy is accumulated by performing a series of $n$ MDL-like experiments. In these rounds, we draw inputs $X_i$ from  the public $\delta$-SV  source and feed them to the device $D_\energyBound$ which produces outputs $A_i$. Let 
\begin{align*}
\mathrm{freq}_\mathbf{AX}(a,x):=\frac{|\{i\leq n|(A_i,X_i)=(a,x)\}|}{n}
\end{align*}
be the empirical frequencies after $n$ entropy accumulation rounds. We decide whether to abort or not by comparing $\mdlIneq{\energyBound}(\mathrm{freq}_\mathbf{AX})$ with $(I_{\rm exp}+\gamma_{\mathrm{est}})$, where $I_{\rm exp}$ is the expected violation of the MDL ineq. in Eq. \eqref{eq:mdl-inequality} and $\gamma_{\mathrm{est}}\in \left ( 0,I_\mathrm{exp} \right )$ is the tolerated deviation from such value. In  the  second  part,  we  draw  another string  from  the  $\delta$-SV  source  and  use  this  string,  as  well  as the  output  from  the  entropy  accumulation  part,  as  inputs  for a quantum-proof randomness extractor in the Markov model strong in the second input \cite{arnon2016quantum}. The extractor then produces the final output $\mathbf{K}$. 

\begin{algorithm}[H]
	\floatname{algorithm}{Protocol}
	\raggedright
	\caption{Randomness Amplification Protocol}
	\label{protocol:rap}
	\begin{algorithmic}[1]
		\STATEx \textbf{Arguments:} 
		\STATEx\hspace{\algorithmicindent} $\SV_\delta$ -- $\delta$-SV source.
		\STATEx\hspace{\algorithmicindent} $D_\energyBound$ -- untrusted device $D_\energyBound$ made of two components: a preparation box $\prepBox_\energyBound$ and a measurement box $\msrmntBox$.
		\STATEx\hspace{\algorithmicindent} $n \in \mathbb{N}_+$ -- number of rounds.
		\STATEx\hspace{\algorithmicindent} $I_{\mathrm{exp}}$ -- expected violation of Eq. \eqref{eq:mdl-inequality}.
		\STATEx\hspace{\algorithmicindent} $\gamma_{\mathrm{est}} \in \left ( 0,I_\mathrm{exp} \right )$ -- width of statistical confidence interval for the estimation test.
		\STATEx\hspace{\algorithmicindent} $\mathrm{Ext}:\{0,1\}^{n}\times\{0,1\}^d\rightarrow\{0,1\}^m$ -- $(k_{1}, k_{2}, \varepsilon_{\mathrm{ext}})$ quantum-proof randomness extractor in the Markov model which is strong in the second input.
	
		\STATEx
	
		\STATEx \textbf{Entropy Accumulation:}
		\STATE For every round $i \in \{1,\dots,n\}$ do:
		\STATE\hspace{\algorithmicindent} Draw a bit $X_i$ from $\SV_\delta$.
		\STATE\hspace{\algorithmicindent} Feed $X_i$ to $\prepBox_\energyBound$ and record $\msrmntBox$'s output $A_i$.
		\STATE Abort the protocol if $\mdlIneq{\energyBound}(\mathrm{freq}_\mathbf{AX}) > (I_\mathrm{exp} + \gamma_{\mathrm{est}})$.
		\STATEx
		\STATEx \textbf{Randomness Extraction:}
		\STATE Draw a bit string $\mathbf{Z}$ of length $d$ from $\SV_\delta$.
		\STATE Use $\mathrm{Ext}$ to create $\mathbf{K} = \mathrm{Ext}(\mathbf{A}, \mathbf{Z})$.
	\end{algorithmic}
\end{algorithm}

Our main result is Theorem \ref{thm:soundness-and-completeness}, which states that for every $\delta$-SV source, there is a choice of parameters for Protocol \ref{protocol:rap} such that with arbitrarily high probability it does not abort and produces a bit string $\mathbf{K}$ which is $\epsilon$-close (in trace distance) to being uniformly distributed and independent of all the adversary's side information.

\begin{theorem}\label{thm:soundness-and-completeness}
Given any public $\delta$-SV source $\SV_\delta$, with $0\leq\delta<1/2$, there exists an energy bound $\energyBound\in [0,1]^2$ and an achievable violation $\mdlIneq{\energyBound}^*$ of Eq. \eqref{eq:mdl-inequality} such that for every $I_\mathrm{exp}\leq \mdlIneq{\energyBound}^*$:
\begin{itemize}[leftmargin=*]
\item There exists a device $D_\energyBound$ satisfying assumptions \ref{assumption:avg-energy}-\ref{assumption:no-entanglement} such that Protocol \ref{protocol:rap} does not abort with probability
\begin{align}
1-\Pr[\mathrm{Abort}]\geq 1-2^{-O(n\gamma_{\mathrm{est}}^2)}.\tag{completeness}
\end{align}
\item For any desired security parameter $\epsilon_s\in(0,1)$ and any desired length $m$ of the output string $\mathbf{K}$, there exists a number $n$ of rounds and a number $d$ of additional bits from $\SV_\delta$ such that if a device $D_\energyBound$ and the source $\SV_\delta$ satisfy assumptions \ref{assumption:avg-energy}-\ref{assumption:markov-chain}, then
\begin{align}
\frac{1}{2}(1-\Pr[\mathrm{Abort}])||\rho_{\mathbf{K}\Sigma}-\rho_{U_m}\otimes\rho_{\Sigma}||_\mathrm{tr}\leq \epsilon_s,\tag{soundness}
\end{align} 
where $\Sigma=E\mathbf{X}\mathbf{Z}$ is Eve's side information and $\rho_{U_m}$ is the maximally mixed state of $m$ qubits.
\end{itemize}
%
%
\end{theorem}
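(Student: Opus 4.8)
The plan is to treat completeness and soundness separately, with soundness resting on entropy accumulation fed by the single-round bound \eqref{eq:min-entropy-lower-bounds} and closed off by the Markov-model extractor. For completeness, I would first fix the honest implementation: the statement hands us an energy bound $\energyBound$ and an achievable violation $\mdlIneq{\energyBound}^*$, so I take the optimal $\energyBound$-bounded quantum strategy attaining it and let $D_\energyBound$ play it independently in every round, which manifestly satisfies \ref{assumption:avg-energy}--\ref{assumption:no-entanglement}. Because the rounds are i.i.d., $\mdlIneq{\energyBound}(\mathrm{freq}_\mathbf{AX})$ is an average of $n$ bounded i.i.d. single-round contributions whose mean is $I_\mathrm{exp}$; the abort event $\{\mdlIneq{\energyBound}(\mathrm{freq}_\mathbf{AX})>I_\mathrm{exp}+\gamma_{\mathrm{est}}\}$ is therefore a one-sided large-deviation event, and Hoeffding's inequality yields $\Pr[\mathrm{Abort}]\leq 2^{-O(n\gamma_{\mathrm{est}}^2)}$, as claimed.

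For soundness the backbone is the Entropy Accumulation Theorem (EAT). I would first turn the single-round SDP bound $\eta$ into an affine min-tradeoff function $f$: since the conditional von Neumann entropy dominates the min-entropy, $\eta$ also lower-bounds $H(A_i\mid X_iE,\lambda)$ for any $\energyBound$-bounded round consistent with a given value of $\mdlIneq{\energyBound}$, and I take a tangent line to (a suitable extension of) $\eta$ at the accept threshold as $f$. The EAT's Markov hypothesis is supplied verbatim by the first identity in \ref{assumption:markov-chain}, $I(A_1\dots A_{i-1}:X_i\mid X_1\dots X_{i-1}E,\lambda)=0$, and I keep $\lambda$ in the conditioning throughout so that the per-round bound---derived precisely for behaviours in $\classicalSet{\energyBound}^\delta$ at fixed $\lambda$---stays valid round by round. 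Conditioning on the protocol not aborting forces $\mdlIneq{\energyBound}(\mathrm{freq}_\mathbf{AX})\leq I_\mathrm{exp}+\gamma_{\mathrm{est}}$, so the EAT delivers $H_\mathrm{min}^{\epsilon_s}(\mathbf{A}\mid\mathbf{X}E,\lambda)\geq n\,f(I_\mathrm{exp}+\gamma_{\mathrm{est}})-c\sqrt{n}$ with $c$ governed by correction factors of the type $\eatend$; since the single-round analysis certifies a strictly positive rate for every $\delta<1/2$, this beats the extractor's input requirement $k_1$ once $n$ is taken large enough.

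To finish I would bound the seed and invoke the extractor. Each bit of $\SV_\delta$ carries min-entropy at least $-\log(1/2+\delta)>0$, so a string $\mathbf{Z}$ of length $d$ has $H_\mathrm{min}(\mathbf{Z}\mid\lambda)\geq d\,(-\log(1/2+\delta))\geq k_2$ for $d$ large enough. The second identity of \ref{assumption:markov-chain}, $I(\mathbf{Z}:\mathbf{A}\mid\mathbf{X}E,\lambda)=0$, is precisely the Markov chain $\mathbf{A}\leftrightarrow\mathbf{X}E\lambda\leftrightarrow\mathbf{Z}$ under which the quantum-proof two-source extractor, strong in the second input, is guaranteed to work; applied to $(\mathbf{A},\mathbf{Z})$ it outputs $\mathbf{K}$ that is $\varepsilon_{\mathrm{ext}}$-close to uniform and independent of $\Sigma=E\mathbf{X}\mathbf{Z}$. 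Adding the smoothing error $\epsilon_s$, the EAT failure probability $\epsilon_{\mathrm{EAT}}$ and $\varepsilon_{\mathrm{ext}}$, weighting by $(1-\Pr[\mathrm{Abort}])$, and apportioning the individual budgets to sum to the target yields the stated bound. I expect the main obstacle to be the construction and validity of the min-tradeoff function together with the correct handling of the non-aborting conditioning, since the EAT's i.i.d.-free hypotheses must be verified in a prepare-and-measure setting whose test inputs are themselves drawn from a weak, adversarially correlated SV source---a situation nonstandard relative to DI protocols that feed in (near-)uniform inputs.
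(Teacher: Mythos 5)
Your proposal is correct and follows essentially the same route as the paper's proof: completeness via an honest i.i.d.\ device attaining $I_\mathrm{exp}$ plus a Hoeffding bound, and soundness via the EAT with an affine min-tradeoff function obtained by linearizing the single-round SDP bound (the paper makes your ``tangent line'' precise through the dual SDP and a first-order Taylor expansion, following the Brown--Ragy--Colbeck technique), with assumption \ref{assumption:markov-chain} supplying both the EAT Markov condition and the Markov-model hypothesis for the two-source extractor, and $H_\mathrm{min}(\mathbf{Z}\mid\lambda)\geq -d\log(1/2+\delta)$ feeding $k_2$. The final error accounting you sketch matches the paper's case analysis yielding $6(\varepsilon_s+\varepsilon_{\mathrm{ext}})+\varepsilon_{\mathrm{EAT}}$.
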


\begin{proof}[Proof sketch] The proof, which we defer to Appendix \hyperref[appendix:proof-main-result]{C}, goes along the same lines as that for the DI RAP of \cite[Protocol 2]{kessler2020device}. 
Completeness follows from the existence of $\energyBound$-behaviours violating Eq. \eqref{eq:mdl-inequality} for every bias $\delta$ of the input's distribution and from applying a Hoeffding bound to sufficiently many independent copies of such distributions. As for the soundness, it follows from using the Entropy Accumulation Theorem (EAT) \cite{dupuis2016entropy} to go from the lower bounds to $H_\textrm{min}(A_i|X_i,E)$ in Eq. \eqref{eq:min-entropy-lower-bounds} to lower bounds to the $n$-round conditional smooth min-entropy $H_\textrm{min}^{\epsilon_s}(\mathbf{A}|\mathbf{X},E)$ (we follow the techniques in \cite{8935370}) and then proving the existence of suitable arguments for the extractor to produce, from $\mathbf{A}$ and $\mathbf{Z}$, the desired number $m$ of $\epsilon_s$-secure bits. 
\end{proof}

\section{Discussion}\label{sec:discussion}
In this work, we have proven that randomness amplficiation can be achieved in a prepare-and-measure scenario with the assumption of an energy-bound on the otherwise uncharecterized prepared states. In addition to being the first semi-DI RAP, by tolerating the whole range of biases $\delta\in[0,1/2)$ our result significantly improves over previous works which considered the use of $\delta$-SV sources in a semi-DI setting \cite{zhou2015semi,PhysRevA.94.032318}.
We expect our techniques, chiefly those leading to Eq. \ref{eq:mdl-inequality}, to be useful in the design of semi-DI protocols incorporating SV sources in other semi-DI schemes, such as the standard dimension-bounded or the recently introduced based on ``restricted-distrust'' \cite{tavakoli2021semi}.

\bigskip

\emph{Acknowledgments.} We acknowledge financial support from the ERC AdG CERQUTE, the EU project QRANGE, the AXA Chair in Quantum Information Science, the Government of Spain (FIS2020-TRANQI and Severo Ochoa CEX2019-000910-S), Fundaci\'o Cellex, Fundaci\'o Mir-Puig and Generalitat de Catalunya (CERCA, AGAUR SGR 1381).

\bibliographystyle{unsrtnat}

%
\pagebreak

\begin{widetext}

\section{Appendix A: Proof of Lemma \ref{lemma:mdl-ineq}}
\label{appendix:mdl-ineq}
Let us first recall the definition of the set of MDL-like classical behaviours for our semi-DI scenario with $\delta$-SV sources:
\begin{align*}
\classicalSet{\energyBound}^\delta&:=\{\sum_\lambda p(\lambda)\prob_{AX|\lambda} \mid \frac{1}{2}-\delta\leq p(x|\lambda)\leq \frac{1}{2}+\delta \text{ and }\prob_{A|X,\lambda}\in\classicalSet{\energyBound}\}.
\end{align*}
Behaviours outside this set cannot be reproduced by convex combinations of $\energyBound$-bounded deterministic strategies even if they are allowed to be correlated with the choice of inputs. As it follows from the results in \cite{putz2014arbitrarily,putz2016measurement}, 
$\classicalSet{\energyBound}^\delta$ is a polytope whose set of vertices $\mathcal{V}(\classicalSet{\energyBound}^\delta)$ satisfies:

\begin{align}\label{eq:vertices-mdl-set}
\mathcal{V}(\classicalSet{\energyBound}^\delta)\subseteq\{\prob_{AX}|p(a,x)=p(a|x)v(x) \land \prob_{A|X}\in\mathcal{V}(\classicalSet{\energyBound}) \land \mathbf{v}_X\in\{(1/2+\delta,1/2-\delta),(1/2-\delta,1/2+\delta)\}\}
\end{align}
with
\begin{align}\label{eq:vertices-local-set}
&\mathcal{V}(\classicalSet{\energyBound})=\{(1,0,1,0),(0,1,0,1),(1,0,1-(\omega_0+\omega_1),\omega_0+\omega_1),&\nonumber\\
&\qquad\qquad (0,1,\omega_0+\omega_1,1-(\omega_0+\omega_1)),(1-(\omega_0+\omega_1),\omega_0+\omega_1,1,0),(\omega_0+\omega_1,1-(\omega_0+\omega_1),0,1)\}&
\end{align}
the set of vertices of $\classicalSet{\energyBound}$ \cite{van2017semi}. We now restate and prove Lemma \ref{lemma:mdl-ineq}.

\medskip
\textbf{Lemma 1} For all $\prob_{AX} \in \classicalSet{\energyBound}^\delta$ it holds that
\begin{align}\label{eq:mdl-inequality-app}
\mdlIneq{\energyBound}(\prob_{AX}) := \mu_{\energyBound,\delta}[p(a==x)]-\frac{1}{\mu_{\energyBound,\delta}}[p(a\neq x)]\geq B_{\energyBound,\delta},
\end{align}
with 
\begin{align}
B_{\energyBound,\delta}:=[\mu_{\energyBound,\delta}+\frac{1}{\mu_{\energyBound,\delta}}](1/2-\delta)(1-\omega_0-\omega_1)-\frac{1}{\mu_{\energyBound,\delta}}
\end{align}
and $\mu_{\energyBound,\delta}:=(1/4-\delta^2)\omega_0\omega_1$.

\begin{proof}
Let $\prob_{AX} \in \classicalSet{\energyBound}^\delta$. Then,
\begin{align*}\label{eq:proof-mdl-bound}
\mdlIneq{\energyBound}(\prob_{AX}) &= \sum_\lambda p(\lambda)\sum_x p(x|\lambda)[\mu_{\energyBound,\delta}p(a==x|x,\lambda)-\frac{1}{\mu_{\energyBound,\delta}}p(a==1-x|x,\lambda)]\\
&=\sum_\lambda p(\lambda)\sum_x p(x|\lambda)[\mu_{\energyBound,\delta}p(a==x|x,\lambda)-\frac{1}{\mu_{\energyBound,\delta}}(1-p(a==x|x,\lambda))]\\
&=\sum_\lambda p(\lambda)\sum_x p(x|\lambda) (\mu_{\energyBound,\delta}+\frac{1}{\mu_{\energyBound,\delta}})[p(a==x|\lambda)]-\frac{1}{\mu_{\energyBound,\delta}}\\
&\geq \sum_\lambda p(\lambda)(\frac{1}{2}-\lambda)(\mu_{\energyBound,\delta}+\frac{1}{\mu_{\energyBound,\delta}})[p(0|0,\lambda)+p(1|1,\lambda)]-\frac{1}{\mu_{\energyBound,\delta}}\\
&\geq (\frac{1}{2}-\delta)(\mu_{\energyBound,\delta}+\frac{1}{\mu_{\energyBound,\delta}})(1-\omega_0-\omega_1)-\frac{1}{\mu_{\energyBound,\delta}}=B_{\energyBound,\delta},
\end{align*}
where in the last line we have used that $p(0|0,\lambda)+p(1|1,\lambda)\geq 1-(\omega_0+\omega_1)$ for all the vertices of $\classicalSet{\energyBound}$ in Eq. \eqref{eq:vertices-local-set} (which can be seen by a simple inspection) and therefore, by linearity, it also holds for all $\prob_{A|X,\lambda}\in \classicalSet{\energyBound}$.
\end{proof}

\section{Appendix B: Single-round min-entropy}\label{appendix:min-entropy}
In this section we quantify how much randomness can be certified as a function of the amount of violation of $\mdlIneq{\energyBound}$.

We consider a scenario in which the boxes $\prepBox$ and $\msrmntBox$ have been prepared by an eavesdropper Eve who holds some classical side information $\lambda$ about the $\delta$-SV source. In full generality, Eve chooses a quantum realization $q^\lambda=\{\{\$^{x,\lambda}\}_x,\rho,\{M^\lambda_a\}_a\}$ for the boxes, where $\{\$^{x,\lambda}\}_x$ are local quantum channels that box $\prepBox$ will apply to (w.l.o.g) a fixed initial state $\rho$ on input $x$ and $\{M^\lambda_a\}_a$ is the binary measurement to be performed on $\rho^{x,\lambda}=\$^{x,\lambda}(\rho)$ by box $\msrmntBox$. In addition, we let Eve hold a purification $\ket{\psi}$ of $\rho=\Tr_E[\ket{\psi}\bra{\psi}]$. Eve's aim is to guess $\msrmntBox$'s outcome when the input to $\prepBox$ was $x$ by using the result $c$ she gets from a measurement $\{E^{\lambda}_c\}_c$ on her share of $\ket{\psi}$. We interpret Eve's measurement as a state preparation. That is, when Eve's measurement result is $c$, occurring with probability $$p(c|x,\lambda)=\Tr[(\mathbb{I}\otimes E^
{\lambda}_c)\{\$^{x,\lambda}\otimes\mathbb{I}\}(\ket{\psi}\bra{\psi})],$$ 
box $\prepBox$ effectively prepares the state $$\rho^{x,\lambda}_c=\Tr_E[\left(\mathbb{I}\otimes E^{\lambda}_c\{\$^{x,\lambda}\otimes\mathbb{I}\}(\ket{\psi}\bra{\psi})\mathbb{I}\otimes (E^{\lambda}_c)^\dag\right)]/p(c|x,\lambda)$$
on input $x$. Notice that, since the channels $\$^{x,\lambda}$ are local, $p(c|0,\lambda)=p(c|1,\lambda)=p(c|\lambda)$. 

In this adversarial setting, two different instantiations of the energy constraint in Eq. \eqref{eq:energy-bound} were considered in \cite{van2017semi}:
\begin{align}
\sum_c p(c|\lambda)\Tr[H\rho^{x,\lambda}_c]=\sum_c p(c|\lambda)\omega_x^{\lambda,c}=\Tr[H\rho^{x,\lambda}]&\leq \omega^{\rm avg}_x\tag{max-avg-energy}\\
\max_c \Tr[H\rho^{x,\lambda}_c]&\leq \omega^{\rm pk}_x\tag{max-peak-energy}
\end{align}
Clearly, the max-peak assumption is the strongest and it is not hard to see that if $(\omega^{\rm pk}_0,\omega^{\rm pk}_1)\geq (1,1)$ then the set of allowed correlations is completely unrestricted. In this work we will only assume a bound $\energyBound^{\rm avg}=(\omega^{\rm avg}_0,\omega^{\rm avg}_1)$ on the average energies.

%
%
%
%
%

Let us now assume that a value $\mdlIneq{\energyBound^{\rm avg}}^*<B_{\energyBound^{\rm avg},\delta}$ of Eq. \eqref{eq:mdl-inequality} is observed. Eve's semi-DI average guessing probability is given by
\begin{align}\label{eq:pguess}
& p_{{\rm guess}}(A|X,E,\lambda,\delta,\mdlIneq{\energyBound^{\rm avg}}^*):= \max_{\prob_{C|\lambda},\{\prob_{A|X,c,\lambda}\}_c,\{\energyBound^{\lambda,c}\}_{c}}\sum_{x}p(x|\lambda)\sum_{c}p(c|\lambda)\max_a p(a|x,c,\lambda)&&\nonumber\\
&\qquad\qquad~\textrm{subject to } &\nonumber\\
&\qquad\qquad\qquad\qquad \mdlIneq{\energyBound^{\rm avg}}(\prob_{AX|\lambda})\leq \mdlIneq{\energyBound^{\rm avg}}^* &\nonumber\\
&\qquad\qquad\qquad\qquad \prob_{A|X,\lambda,c}\in \quantumSet{\energyBound^{\lambda,c}} \text{ for all } c&\nonumber\\
&\qquad\qquad\qquad\qquad \sum_c p(c|\lambda)\omega^{\lambda,c}_c\leq \omega_x^{\rm avg}\text{ for all } x.
\end{align}

This optimization depends on $\prob_{X|\lambda}$ of which we only assume the the SV condition $$1/2-\delta\leq p(x|\lambda)\leq 1/2+\delta.$$ 
By noting that for all $\prob_{AX|\lambda}$ it holds that
\begin{align*}
\mdlIneq{\energyBound}(\prob_{AX|\lambda})&=\mu_{\energyBound^{\rm avg},\delta}[p(0,0|\lambda)+p(1,1|\lambda)]-\frac{1}{\mu_{\energyBound^{\rm avg},\delta}}[p(1,0|\lambda)+p(0,1|\lambda)]\\
&=\mu_{\energyBound^{\rm avg},\delta}[p(0|0,\lambda)p(0|\lambda)+p(1|1,\lambda)p(1|\lambda)]-\frac{1}{\mu_{\energyBound^{\rm avg},\delta}}[p(1|0,\lambda)p(0|\lambda)+p(0|1,\lambda)p(1|\lambda)]\\
&\geq 
(\frac{1}{2}-\delta)\mu_{\energyBound^{\rm avg},\delta}[p(0|0,\lambda)+p(1|1,\lambda)]-\frac{(\frac{1}{2}+\delta)}{\mu_{\energyBound^{\rm avg},\delta}}[p(1|0,\lambda)+p(0|1,\lambda)]\\
&=:I^\mathrm{lower}_{\energyBound^{\rm avg},\delta}(\prob_{A|X,\lambda})
\end{align*}
we get the following upper bound to Eq. \eqref{eq:pguess}
\begin{align*}
& p_{{\rm guess}}(A|X,E,\lambda,\delta,\mdlIneq{\energyBound^{\rm avg}}^*)\leq (\frac{1}{2}+\delta) \max_{\prob_{\Lambda'},\{\prob_{A|X,\lambda'}\}_{\lambda'},\{\energyBound^{\lambda'}\}_{\lambda'}}\sum_{x,\lambda'}p(\lambda')\max_a p(a|x,\lambda')&&\nonumber\\
&\qquad\qquad~\textrm{subject to } &\nonumber\\
&\qquad\qquad\qquad\qquad \sum_{\lambda'}p(\lambda')I^\mathrm{lower}_{\energyBound^{\rm avg},\delta}(\prob_{A|X,\lambda})\leq \mdlIneq{\energyBound^{\rm avg}}^* &\nonumber\\
&\qquad\qquad\qquad\qquad \prob_{A|X,\lambda'}\in \quantumSet{\energyBound^{\lambda'}} \text{ for all } \lambda'&\nonumber\\
&\qquad\qquad\qquad\qquad \sum_{\lambda'} p(\lambda')\omega^{\lambda'}_x\leq \omega_x^{\rm avg}\text{ for all } x.
\end{align*}
where we have introduced the random variable $\Lambda'$ distributed according to $p(\lambda')=\mathrm{Pr}[c=\lambda'|\lambda]$. 

From \cite[Prop. 1]{bancal2014more} it follows that for this maximization is enough to consider $|\{0,1\}^2|=4$ different values of $\Lambda'$ (one per each combination of inputs and outputs), and therefore
\begin{align}\label{eq:upper-bound-pguess}
& p_{{\rm guess}}(A|X,E,\lambda,\delta,\mdlIneq{\energyBound^{\rm avg}}^*)\leq \eta(\delta,\energyBound^{\rm avg},\mdlIneq{\energyBound^{\rm avg}}^*)
\end{align}
with
\begin{align}\label{eq:sdp-upper}
& \eta(\delta,\energyBound,I):=(\frac{1}{2}+\delta)\max_{\{(\tilde{\prob}_{A|X,\lambda'},\energyBound^{\lambda'})\}_{\lambda'\in\{0,1\}^2}}\sum_{a,x} \tilde{p}_{(a,x)}(a|x)&\nonumber\\
&\qquad\qquad\qquad\textrm{subject to } &\nonumber\\
&\qquad\qquad\qquad\qquad \sum_{\lambda'}I^\mathrm{lower}_{\energyBound^{\rm avg},\delta}(\prob_{A|X,\lambda'})\leq I &\nonumber\\
&\qquad\qquad\qquad\qquad  \tilde{\prob}_{A|X,\lambda'}\in \tilde{\quantumSet{\energyBound^{\lambda'}}} \text{ for all } \lambda'&\nonumber\\
&\qquad\qquad\qquad\qquad  \sum_{\lambda'}\energyBound^{\lambda'}_x \leq \omega_x\text{ for all } x &\nonumber\\
&\qquad\qquad\qquad\qquad  \sum_{a,\lambda'} \tilde{p_{\lambda'}}(a|x) = 1.
\end{align}
where we have absorbed the weights $p(\lambda')$ into the normalizations of $\tilde{\prob}_{A|X,\lambda'}$. Finally, the fact that Eq. \eqref{eq:sdp-upper} is an SDP follows from the SDP characterization of the sets $\quantumSet{\energyBound}$ given in \cite[Thm. 1]{van2019correlations}.

\medskip

In order to get a lower bound on the maximum amount of conditional min-entropy that can be certified in this scenario for given bias $\delta$ of the input SV source, we need to optimize over violations of the MDL inequality in Eq. \eqref{eq:mdl-inequality}. However, as before, the value of Eq. \eqref{eq:mdl-inequality} depends on $\prob_{X|\lambda}$ of which we only assume the the SV condition $1/2-\delta\leq p(x|\lambda)\leq 1/2+\delta$. By noting that for all $\prob_{AX|\lambda}$ it holds that
\begin{align*}
\mdlIneq{\energyBound}(\prob_{AX|\lambda})&=\mu_{\energyBound^{\rm avg},\delta}[p(0,0|\lambda)+p(1,1|\lambda)]-\frac{1}{\mu_{\energyBound^{\rm avg},\delta}}[p(1,0|\lambda)+p(0,1|\lambda)]\\
&=\mu_{\energyBound^{\rm avg},\delta}[p(0|0,\lambda)p(0|\lambda)+p(1|1,\lambda)p(1|\lambda)]-\frac{1}{\mu_{\energyBound^{\rm avg},\delta}}[p(1|0,\lambda)p(0|\lambda)+p(0|1,\lambda)p(1|\lambda)]\\
&\leq(\frac{1}{2}+\delta)\mu_{\energyBound,\delta}[p(0|0,\lambda)+p(1|1,\lambda)]-\frac{(\frac{1}{2}-\delta)}{\mu_{\energyBound,\delta}}[p(1|0,\lambda)+p(0|1,\lambda)]=: \mdlIneq{\energyBound}^{\rm upper}(\prob_{A|X,\lambda})
\end{align*}
we get the following lower bound on the amount of certifiable single-round min-entropy:
\begin{align}\label{eq:min-entropy-lower-bound-general}
\Hmin(A|X,E,\lambda,\delta) &:= -\log_2 p_{{\rm guess}}(A|X,E,\lambda,\delta) \nonumber\\
&\geq f(\delta)
\end{align}
with
\begin{align}
f(\delta)&:= -\log_2 \eta(\delta,\energyBound^{\rm avg}(\delta),\mdlIneq{\energyBound^{\rm avg}}^{\rm upper,opt})\\
\mdlIneq{\energyBound^{\rm avg}}^{\rm upper,opt}&:=\min_{\energyBound,\prob_{A|X}\in \quantumSet{\energyBound}} \mdlIneq{\energyBound}^{\rm upper}(\prob_{A|X})\\
\energyBound^{\rm avg}(\delta)&:=\argmin_{\energyBound}\min_{\prob_{A|X}\in \quantumSet{\energyBound}} \mdlIneq{\energyBound}^{\rm upper}(\prob_{A|X}).\label{eq:avg-omega}
\end{align}
A customary working assumption in both DI and semi-DI protocols dealing with SV sources is that the inputs, although arbitrarily correlated (up to $\delta$) with the devices, are seen as uniformly distributed by the honest users of the protocols (see e.g. \cite{colbeck2012free,zhou2015semi}). That is,
\begin{align*}
\sum_\lambda p(x,\lambda)=1/2.
\end{align*}
Intuitively, the motivation behind this assumption is that Eve does not want to reveal to the users her correlation with the source. Under this assumption, and by letting 
$$
\mdlIneq{\energyBound}^{\rm unif}(\prob_{A|X}):=\frac{\mu_{\energyBound,\delta}}{2}[p(0|0,\lambda)+p(1|1,\lambda)]-\frac{1}{2\mu_{\energyBound,\delta}}[p(1|0,\lambda)+p(0|1,\lambda)]
$$
we get the following (in general, better) min-entropy lower bound
\begin{align}\label{eq:min-entropy-lower-bound-uniform}
\Hmin(A|X,E,\lambda,\delta) &\geq g(\delta)
\end{align}
with
\begin{align*}
g(\delta)&:=-\log_2 \eta(\delta,\energyBound_{\rm unif}^{\rm avg}(\delta),\mdlIneq{\energyBound^{\rm avg}}^{\rm unif,opt})\\
\mdlIneq{\energyBound^{\rm avg}}^{\rm unif,opt}&:=\min_{\energyBound,\prob_{A|X}\in \quantumSet{\energyBound}} \mdlIneq{\energyBound}^{\rm unif}(\prob_{A|X})\\
\energyBound^{\rm avg}_{\rm unif}(\delta)&:=\argmin_{\energyBound}\min_{\prob_{A|X}\in \quantumSet{\energyBound}} \mdlIneq{\energyBound}^{\rm unif}(\prob_{A|X}).
\end{align*}

In Fig. \ref{fig:min-entropy} we plot $f(\delta)$ and $g(\delta)$. As can be seen in the plot, a nonzero min-entropy can be certified for $\delta\to 1/2$. Moreover, we approach one bit of min-entropy as the SV source's bias approaches $0$. 

\begin{figure}[H]
     \centering
     \subfloat[][]{\includegraphics[scale=0.45]{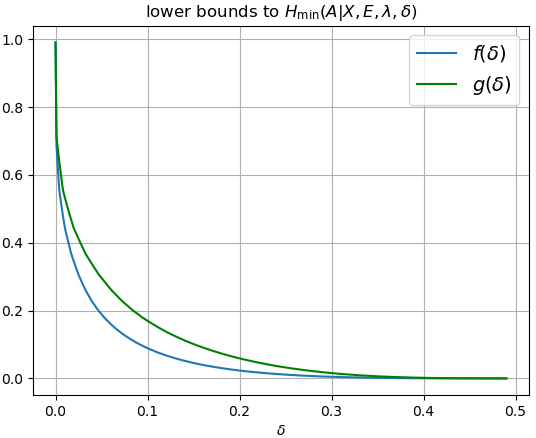}\label{fig:min-entropy-a-main}}
     \subfloat[][]{\includegraphics[scale=0.45]{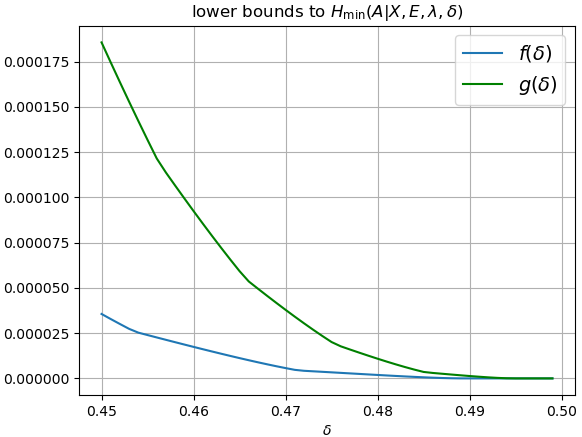}\label{fig:min-entropy-b-main}}
     \caption{Lower bounds to the single-round min-entropy certified by a violation of Eq. \ref{eq:mdl-inequality} as a function of the SV source's bias $\delta$ and for a suitable choice $\omega(\delta)$ of the ``energy'' bound in Eq. \ref{eq:energy-bound}. The {\color{blue}blue} curve corresponds to the general bound. The {\color{darkgreen} green} lower bound applies when, as it's customary (see e.g. \cite{colbeck2012free,zhou2015semi}), the observed distribution of the inputs is uniform, i.e. $\sum_\lambda p(x,\lambda)=1/2$. In (b) we zoom in the range $0.45\leq\delta< 0.5$, where the bounds are still nonzero (albeit very small).\label{fig:min-entropy}}
\end{figure}

We will prove the soundness and completeness of our RAP using the general bound in Eq. \eqref{eq:min-entropy-lower-bound-general}. The modifications required to use the better bound in Eq. \eqref{eq:min-entropy-lower-bound-uniform} are straightforward.

\section{Appendix C: Formal statement and proof of Theorem \ref{thm:soundness-and-completeness}}
Let us first restate Protocol \ref{protocol:rap} from the main text in an equivalent form which simplifies the analyisis with the Entropy Accumulation Theorem (EAT) \cite{dupuis2016entropy,arnon2019simple}.

\setcounter{algorithm}{0}
\begin{algorithm}[H]
	\floatname{algorithm}{Protocol}
	\raggedright
	\caption{Randomness Amplification Protocol}
	\label{protocol:rap2}
	\begin{algorithmic}[1]
		\STATEx \textbf{Arguments:} 
		\STATEx\hspace{\algorithmicindent} $\SV_\delta$ -- $\delta$-SV source
		\STATEx\hspace{\algorithmicindent} $D_\energyBound$ -- untrusted device $D_\energyBound$ made of two components: a preparation box $\prepBox_\energyBound$ and a measurement box $\msrmntBox$
		\STATEx\hspace{\algorithmicindent} $n \in \mathbb{N}_+$ -- number of rounds
		\STATEx\hspace{\algorithmicindent} $I_{\mathrm{exp}}$ -- upper bound on the expected violation of Eq. \eqref{eq:mdl-inequality-app}.
		\STATEx\hspace{\algorithmicindent} $\gamma_{\mathrm{est}} \in \left ( 0,I_\mathrm{exp} \right )$ -- width of statistical confidence interval for the estimation test 
		\STATEx\hspace{\algorithmicindent} $\mathrm{Ext}:\{0,1\}^{n}\times\{0,1\}^d\rightarrow\{0,1\}^m$ -- $(k_{1}, k_{2}, \varepsilon_{\mathrm{ext}})$ quantum-proof randomness extractor in the Markov model which is strong in the second input.
	
		\STATEx
	
		\STATEx \textbf{Entropy Accumulation:}
		\STATE For every round $i \in \{1,\dots,n\}$ do:
		\STATE\hspace{\algorithmicindent} Draw a bit $X_i$ from $\SV_\delta$.
		\STATE\hspace{\algorithmicindent} Feed $X_i$ to $\prepBox$ and record $\msrmntBox$'s output $A_i$.
		\STATE\hspace{\algorithmicindent} Set $C_i = w_{\delta,\energyBound}(A_i,X_i)$ for $w$ as defined in Eq. \eqref{eq:winning-function}. \label{step:last}
		\STATE Abort the protocol if $\bar{C} \equiv \frac{1}{n} \sum_j C_j > (I_\mathrm{exp} + \gamma_{\mathrm{est}})$. \label{step:abortion}
		\STATEx
		\STATEx \textbf{Randomness Extraction:}
		\STATE Draw a bit string $\mathbf{Z}$ of length $d$ from $\SV_\delta$.
		\STATE Use $\mathrm{Ext}$ to create $\mathbf{K} = \mathrm{Ext}(\mathbf{A}, \mathbf{Z})$.
	\end{algorithmic}
\end{algorithm}
The \emph{winning function} $w_{\delta,\energyBound}(\cdot)$ is defined as
\begin{align}\label{eq:winning-function}
w_{\delta,\energyBound}(A_i,X_i)=
\begin{cases}
\mu_{\energyBound,\delta} & (A_i,X_i)\in\{(0,0),(1,1)\}\\
-(\mu_{\energyBound,\delta})^{-1} & \text{otherwise.}
\end{cases}
\end{align}

The formal statement of our main result is:
\setcounter{theorem}{0}
\begin{theorem}\label{appendix:main-theorem}
Given any public $\delta$-SV source $\SV_\delta$, with $0\leq\delta<1/2$, and a device $D_{\energyBound}$ satisfying assumptions \ref{assumption:avg-energy}-\ref{assumption:markov-chain}, let $n$ be the number of rounds in Protocol~\ref{protocol:rap}, $\varepsilon_{\mathrm{s}}, \varepsilon_{\mathrm{EAT}} \in (0,1)$, $I_\mathrm{exp} \leq \mdlIneq{\energyBound^{\rm avg}}^{\rm upper,opt}$, $\gamma_{\mathrm{est}} \in \left ( 0,I_\mathrm{exp} \right )$ and $m,\varepsilon_{\mathrm{ext}} $ the parameters of the $(k_{1}, k_{2}, \varepsilon_{\mathrm{ext}})$-extractor used in Protocol~\ref{protocol:rap2}, with $k_1,k_2$ fulfilling Equation~\eqref{eq:ext_k_values}.
Then:
\begin{enumerate}
\item (Secrecy) Protocol~\ref{protocol:rap} produces a string $\mathbf{K}$ of length $m$ such that:
$$
\frac{1}{2}(1-\Pr[\mathrm{Abort}])||\rho_{\mathbf{K}\Sigma}-\rho_{U^m}\otimes\rho_{\Sigma}||_\mathrm{tr} \leq 6 \left(\varepsilon_{\mathrm{s}} +\varepsilon_{\mathrm{ext}} \right) + \varepsilon_{\mathrm{EAT}}\;,
$$
where $\Sigma=E\mathbf{X}\mathbf{Z}$ is Eve's side information and $\rho_{U^m}$ is the maximally mixed state of $m$ qubits.
\label{part:soundness}
\item (Completeness) There exists an honest implementation $D_{\energyBound^\mathrm{avg}(\delta)}$ of the device such that Protocol~\ref{protocol:rap} aborts with probability $\Pr[\mathrm{Abort}] \leq\exp \left( - \frac{2 n \mu_{\energyBound^\mathrm{avg}(\delta),\delta}^2\gamma_{\mathrm{est}}^2}{\left (1+\mu_{\energyBound^\mathrm{avg}(\delta),\delta}^2 \right )^2} \right)$ when using this device. \label{part:completeness}
\end{enumerate} 
\end{theorem}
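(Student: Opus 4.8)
The plan is to establish the two parts separately: completeness is a concentration statement about a single honest device, while soundness combines the Entropy Accumulation Theorem (EAT) with a Markov-model two-source extractor. For \emph{completeness}, I would take the i.i.d. honest implementation $D_{\energyBound^{\rm avg}(\delta)}$ built from the optimal energy bound $\energyBound^{\rm avg}(\delta)$ of Eq.~\eqref{eq:avg-omega} and tuned so that its per-round expected score equals $I_{\rm exp}$. The scores $C_i=w_{\delta,\energyBound}(A_i,X_i)$ are then i.i.d. and bounded in $[-1/\mu_{\energyBound,\delta},\,\mu_{\energyBound,\delta}]$, so their range is $(1+\mu_{\energyBound,\delta}^2)/\mu_{\energyBound,\delta}$. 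Since the abort event $\bar C>I_{\rm exp}+\gamma_{\rm est}$ is exactly a one-sided deviation of $\bar C$ above its mean by $\gamma_{\rm est}$, Hoeffding's inequality yields $\pr{\mathrm{Abort}}\le\exp\!\big(-2n\mu_{\energyBound^{\rm avg}(\delta),\delta}^2\gamma_{\rm est}^2/(1+\mu_{\energyBound^{\rm avg}(\delta),\delta}^2)^2\big)$, which is the claimed bound.

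For \emph{soundness}, the first step is to turn the single-round bound $\Hmin(A|X,E,\lambda,\delta)\ge f(\delta)$ of Eq.~\eqref{eq:min-entropy-lower-bound-general}, expressed through the SDP quantity $\eta(\delta,\energyBound,I)$, into a min-tradeoff function for EAT. Because $H(A_i|X_iE)\ge\Hmin(A_i|X_iE)$, the map $I\mapsto-\log_2\eta(\delta,\energyBound^{\rm avg}(\delta),I)$ is a valid lower bound on the per-round von Neumann entropy as a function of the expected score, and I would use its affine tangent at the acceptance threshold $I_{\rm exp}+\gamma_{\rm est}$ as the min-tradeoff function $f_{\min}$. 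The rounds of Protocol~\ref{protocol:rap2} define the EAT channels, and assumption~\ref{assumption:markov-chain} provides precisely the Markov conditions $I(A_1\dots A_{i-1}:X_i|X_1\dots X_{i-1}E,\lambda)=0$ that EAT requires. EAT then gives, on the non-abort event, $H_{\min}^{\varepsilon_s}(\mathbf{A}|\mathbf{X}E)\ge n\,f_{\min}(I_{\rm exp}+\gamma_{\rm est})-c\sqrt{n}$, with $c$ a function of $\varepsilon_s$, $\varepsilon_{\mathrm{EAT}}$ and the constant alphabet sizes $|\{0,1\}|=2$ of both $A_i$ and $X_i$.

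The second step bounds the extractor's second input. The SV constraint forces each fresh bit of $\mathbf{Z}$ to carry at least $-\log_2(1/2+\delta)>0$ bits of min-entropy, even conditioned on the preceding bits and on $\lambda$, so $H_{\min}(\mathbf{Z}\mid E\mathbf{X})\ge -d\log_2(1/2+\delta)$, while the second Markov condition $I(\mathbf{Z}:\mathbf{A}|\mathbf{X}E,\lambda)=0$ of assumption~\ref{assumption:markov-chain} supplies the independence structure required by the Markov-model extractor. I would then fix $n$ and $d$ large enough that $n\,f_{\min}(I_{\rm exp}+\gamma_{\rm est})-c\sqrt{n}\ge k_1$ and $-d\log_2(1/2+\delta)\ge k_2$; both are achievable because the single-round analysis of Appendix~\hyperref[appendix:min-entropy]{B} certifies a strictly positive $f_{\min}$ at the accepted violation for every $\delta<1/2$, and because $-\log_2(1/2+\delta)>0$ whenever $\delta<1/2$. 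Feeding $\mathbf{A}$ and $\mathbf{Z}$ into the strong $(k_1,k_2,\varepsilon_{\mathrm{ext}})$ extractor then outputs $m$ bits that are $\varepsilon_{\mathrm{ext}}$-close to uniform and independent of $\Sigma=E\mathbf{X}\mathbf{Z}$; accumulating the smoothing, EAT, and extraction errors through the standard triangle-inequality bookkeeping around the non-abort event (as in \cite{kessler2020device}) reproduces the stated bound $6(\varepsilon_s+\varepsilon_{\mathrm{ext}})+\varepsilon_{\mathrm{EAT}}$.

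The hard part will be the construction and validation of the min-tradeoff function. Because the inputs are neither uniform nor independent of Eve, but only $\delta$-SV correlated with her, I must ensure that $-\log_2\eta$ genuinely lower-bounds $H(A_i|X_iE)$ for \emph{every} state and every admissible input distribution yielding a given expected score, and not merely for the extremal SV weighting used to derive $\eta$; the worst-casing over $\prob_{X|\lambda}$ via $\mdlIneq{\energyBound}^{\rm upper}$ in Appendix~\hyperref[appendix:min-entropy]{B} is what makes the single-round bound robust, but transporting it round-by-round conditioned on the past is exactly where assumption~\ref{assumption:markov-chain} becomes indispensable. A secondary difficulty is guaranteeing that $f_{\min}$ has the bounded gradient and near-affine form that control EAT's $O(\sqrt{n})$ correction term, since a min-tradeoff function that steepens as $\delta\to 1/2$ would badly degrade the finite-size rate.
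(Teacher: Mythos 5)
Your proposal is correct and follows essentially the same route as the paper: completeness via Hoeffding's inequality applied to the i.i.d.\ honest device with score range $(1+\mu_{\energyBound,\delta}^2)/\mu_{\energyBound,\delta}$, and soundness via the Brown--Ragy--Colbeck-style construction of an affine min-tradeoff function from the dual of the SDP guessing-probability bound (with the SV worst-casing over $\prob_{X|\lambda}$ built into $\mdlIneq{\energyBound}^{\rm upper}$), the EAT under assumption~\ref{assumption:markov-chain}, the Markov-model two-source extractor with $k_2\leq -d\log_2(1/2+\delta)$, and the final two-case bookkeeping yielding $6(\varepsilon_s+\varepsilon_{\mathrm{ext}})+\varepsilon_{\mathrm{EAT}}$. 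The only (immaterial) deviation is that you anchor the affine tangent at the acceptance threshold $I_{\mathrm{exp}}+\gamma_{\mathrm{est}}$ whereas the paper Taylor-expands around $\mathbf{q}^{\mathrm{opt}}$ attaining $\mdlIneq{\energyBound^{\rm opt}}^{\rm upper,opt}$; both are valid tangents of the convex bound $\tilde{f}_{\min}$.
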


In the following, we prove Theorem \ref{appendix:main-theorem}. The derivation will closely follow that of the main result of \cite{kessler2020device}, on which our result is based.

\subsection{Secrecy}
\subsubsection{EAT preliminaries}
Let us first recall the EAT's main concepts and statement.

\begin{definition}[EAT channels]\label{def:eat-channels}
A set of \emph{EAT channels} $\{\N_i\}_{i=1}^n$ is a collection of trace-preserving and completely-positive maps $\N_i:R_{i-1}\rightarrow A_iX_i\Ci R_i$ such that for every $i \in [n]$:
\begin{enumerate}
\item $A_i,X_i$ and $\Ci$ are finite dimensional classical
systems, $R_i$ is an arbitrary quantum system and $\Ci$ is the
output of a deterministic function of the classical registers
$A_i$ and $X_i$.\label{item:eat-ch-cond-1}
\item For any initial state $\rho_{R_0 E}$, the final state $\rho_{\mathbf{A}\mathbf{X}\mathbf{C}E} = \ptr{R_n}{((\N_n \circ \dots \circ \N_1)\otimes \id_{E}) \rho_{R_0 E}}$ fulfils the Markov chain condition $I(A_1\dots A_{i-1}:X_i|X_1\dots X_{i-1}E) = 0$ for every $i \in [n]$.\label{item:eat-ch-cond-2}
\end{enumerate}
\end{definition}

\begin{definition}[Min-tradeoff functions]\label{def:fmin}
Let $\{\N_i\}_{i=1}^n$ be a collection of EAT channels, 
$\C$ denote the common alphabet of the systems
$C_1,\dots,C_n$ and $\P_\C$ denote the set of probability distributions over $\mathcal{C}$. An affine function $f: \P_\C \rightarrow \R$ is a
\emph{min-tradeoff function} for the EAT channels
$\{\N_i\}_{i=1}^n$ if for each $i \in [n]$ it satisfies
\begin{equation*}\label{eq:fmin-definition}
f(\p) \leq \inf_{\sigma_{R_{i-1}R'}: \N_i(\sigma)_{\Ci}=\tau_{\p}} H(A_i|X_iR')_{\N_i(\sigma)},
\end{equation*} 
where $\tau_{\p}:=\sum_{c\in\mathcal{C}}p(c)\ket{c}\bra{c}$, $R'$ is a
register isomorphic to $R_{i-1}$ and the infimum over the empty set is
taken to be $+\infty$.
\end{definition}

The EAT, in a simplified version sufficient for our needs is:

\begin{theorem}[EAT~\cite{dupuis2016entropy}]
\label{thm:EAT}~\newline
Let $\{\N_i\}_{i=1}^n$ be a collection of EAT channels and let
$\rho_{\mathbf{A}\mathbf{X}\mathbf{C}E} = \ptr{R_n}{(( \N_n \circ \dots \circ
  \N_1)\otimes \id_{E})\rho_{R_0 E}}$ be the output state after the
sequential application of the channels $\{\N_i\otimes\id_{E}\}_i$ to
some input state $\rho_{R_0 E}$. Let $\Omega \subseteq \mathcal{C}^n$
be some event that occurs with probability $p_{\Omega}$ and let
$\rho_{|_\Omega}$ be the state conditioned on $\Omega$ occurring. Finally let $\epsmooth \in (0,1)$ and $f$ be a valid min-tradeoff function for $\{\N_i\}_i$. 
If for all $\CC\in\Omega$, with $\pr{\CC}>0$ there is some $t \in \R$
for which $f(\mathrm{freq}_{C^{n}})\geq t$, then
\begin{equation*}\label{eq:EAT-bound}
\Hmin^{\epsmooth}(\mathbf{A}|\mathbf{X} E)_{\rho_{|_{\Omega}}} > n t - n (\errV(f) + \errK(f)) - \errW,
\end{equation*}
where
\begin{equation*}
\mathrm{freq}_{C^{n}}(x) = \frac{|\{ i \in \{1, \dots, n\} : C_{i} = x \}|}{n},
\end{equation*}
\begin{equation*}
\errV(f) \leq \frac{\ln 2}{2\sqrt{n}} \left(\log\left(3\right) + \sqrt{ (\Max{f} - \Min{f})^2 + 2} \right)^2,
\end{equation*}

\begin{equation*}
\errK(f) \leq \frac{1}{6n(1-\sqrt{n})^3\ln 2}\,
2^{\sqrt{n}(1 + \Max{f} - \Min{f})}
\ln^3\left(2^{1 + \Max{f} - \Min{f}} + \expo^2\right)
\end{equation*}
and
\begin{equation*}
\errW \leq \frac{1}{\sqrt{n}}\left(1 - 2 \log(p_{\Omega}\,\epsmooth)\right).
\end{equation*}
\end{theorem}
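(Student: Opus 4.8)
\emph{Proof strategy.} The plan is to prove the bound by passing from the smooth min-entropy to sandwiched Rényi entropies $H_\alpha^\uparrow$ of order $\alpha$ slightly above $1$, establishing an \emph{accumulation} (chain-rule) inequality for $H_\alpha^\uparrow$ across the sequence of channels, and finally optimizing over $\alpha$. First I would invoke the standard conversion $\Hmin^{\epsmooth}(\AA|\XX E)_{\rho_{|\Omega}} \geq H_\alpha^\uparrow(\AA|\XX E)_{\rho_{|\Omega}} - \tfrac{g(\epsmooth)}{\alpha-1}$, valid for $\alpha\in(1,2)$ with $g$ a function logarithmic in $\epsmooth$, together with the conditioning bound $H_\alpha^\uparrow(\AA|\XX E)_{\rho_{|\Omega}} \geq H_\alpha^\uparrow(\AA|\XX E)_{\rho} + \tfrac{\alpha}{\alpha-1}\log p_\Omega$ that relates the conditioned state to the full state $\rho$. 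This reduces the task to lower-bounding the unconditioned quantity $H_\alpha^\uparrow(\AA|\XX E)_\rho$, while the two penalties $\log p_\Omega$ and $\log\epsmooth$---collected and divided by $\sqrt n$ after the final optimization over $\alpha$---are exactly what produce the $\errW$ term.

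The heart of the argument is a chain rule for the sequential action of the EAT channels. Writing $\rho^{(i)}$ for the state after $\N_i\circ\cdots\circ\N_1$, I would prove the telescoping estimate $H_\alpha^\uparrow(\AA|\XX E)_\rho \geq \sum_{i=1}^n \inf_{\omega} H_\alpha^\uparrow(A_i|X_i R')_{\N_i(\omega)}$, where each infimum ranges over inputs $\omega_{R_{i-1}R'}$ producing the relevant round-$i$ marginal. This is where the Markov condition of Definition~\ref{def:eat-channels}, item~\ref{item:eat-ch-cond-2}, namely $I(A_1\dots A_{i-1}:X_i|X_1\dots X_{i-1}E)=0$, is indispensable: it decouples the fresh input $X_i$ from the previously produced outputs and lets the per-round entropies add without cross terms. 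Concretely, I would establish the single-step inequality $H_\alpha^\uparrow(A_1\dots A_i|X_1\dots X_i E) \geq H_\alpha^\uparrow(A_1\dots A_{i-1}|X_1\dots X_{i-1}E) + \inf_\omega H_\alpha^\uparrow(A_i|X_i R')$ by writing both conditional entropies through the sandwiched Rényi divergence against a recursively chosen reference state and invoking the data-processing inequality in combination with the Markov structure, then iterate from $i=1$ to $n$.

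Next I would bring in the min-tradeoff function. By Definition~\ref{def:fmin}, as $\alpha\to 1$ each single-round infimum reduces to the von Neumann conditional entropy $\inf_\sigma H(A_i|X_iR')_{\N_i(\sigma)}$, which is lower-bounded by $f(\p_i)$ for $\p_i$ the round-$i$ distribution of the register $\Ci$; for $\alpha>1$ this acquires a second-order correction of order $(\alpha-1)$ controlled by a variance of $f$ and by $\Max{f}-\Min{f}$. Since $f$ is affine, $\tfrac{1}{n}\sum_i f(\p_i)=f(\mathrm{freq}_{C^{n}})$ on every string, and on the event $\Omega$ the hypothesis $f(\mathrm{freq}_{C^{n}})\geq t$ yields $\sum_i f(\p_i)\geq nt$. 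Collecting terms gives $H_\alpha^\uparrow(\AA|\XX E)_\rho \geq nt - n(\alpha-1)\,\Var{f} - (\text{higher order})$. Finally I would optimize the free parameter $\alpha\in(1,2)$, balancing the $\tfrac{1}{\alpha-1}$ penalties of the first step against the $(\alpha-1)$ second-order term; choosing $\alpha-1\sim 1/\sqrt n$ produces the $1/\sqrt n$ scaling and, after bounding the Taylor remainder, yields precisely the stated $\errV(f)$ and $\errK(f)$.

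The main obstacle is the chain-rule inequality of the second paragraph. Unlike the von Neumann entropy, the sandwiched Rényi entropy obeys no exact additive chain rule, so the single-step bound must be proven through careful operator manipulations---expressing the conditional entropy as a divergence, propagating the reference state recursively, and using both data processing and the vanishing Markov conditional mutual information to certify that appending one round loses no entropy. The second delicate point is controlling the $\alpha$-dependence of the second-order correction uniformly in $n$, so that the optimization over $\alpha$ closes with the precise constants in $\errV$, $\errK$ and $\errW$; this requires tight Taylor estimates on $\partial_\alpha H_\alpha^\uparrow$ at $\alpha=1$ expressed through $\Var{f}$ and the range $\Max{f}-\Min{f}$.
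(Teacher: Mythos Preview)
The paper does not prove Theorem~\ref{thm:EAT}. It is quoted verbatim as a known result from~\cite{dupuis2016entropy} (see also~\cite{arnon2019simple,8935370}) and used as a black box in Appendix~C; there is no ``paper's own proof'' to compare against. Your proposal is, in broad strokes, a faithful outline of the original Dupuis--Fawzi--Renner argument: pass to sandwiched R\'enyi entropies of order $\alpha>1$, use the R\'enyi chain rule (which is where the Markov condition enters), control the $\alpha\to 1$ expansion via the min-tradeoff function, and optimize $\alpha-1\sim 1/\sqrt{n}$.

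That said, there is a genuine gap in your third paragraph. You write that ``since $f$ is affine, $\tfrac{1}{n}\sum_i f(\p_i)=f(\mathrm{freq}_{C^n})$ on every string, and on the event $\Omega$ the hypothesis $f(\mathrm{freq}_{C^n})\geq t$ yields $\sum_i f(\p_i)\geq nt$.'' This is not correct: the $\p_i$ are the \emph{marginal distributions} of the $C_i$ registers and are fixed once the global state is fixed, whereas $\mathrm{freq}_{C^n}$ is a random variable depending on the realised string. Affinity gives only $\mathbb{E}[f(\mathrm{freq}_{C^n})]=\tfrac{1}{n}\sum_i f(\p_i)$, not equality pathwise, so conditioning on $\Omega$ does not force $\sum_i f(\p_i)\geq nt$. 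In the actual EAT proof this is handled differently: one keeps the classical $\CC$ register inside the entropy (bounding $H_\alpha(\AA\CC|\XX E)$, which equals $H_\alpha(\AA|\XX E)$ since $C_i$ is a deterministic function of $A_i,X_i$), builds an auxiliary channel that appends a register encoding $f(C_i)$, and then the chain rule produces a sum that, \emph{after} conditioning on $\Omega$, telescopes to $n\cdot f(\mathrm{freq}_{C^n})\geq nt$ on that event. Without this device the link between the per-round infima and the event $\Omega$ is broken, and your argument as written does not close.
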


\subsubsection{EAT channels for Protocol \ref{protocol:rap}}

To apply the EAT to Protocol \ref{protocol:rap}, we need to show that its execution can be described by the composition of $n$ EAT channels. In the entropy accumulation part of our proposed protocol, we have that in each round $i$ an input $X_i$ is sampled from the $\delta$-SV source and, given $X_i$, the preparation box $\prepBox_\energyBound$ applies a local map to its quantum memory $R_{i-1}$ to prepare a state $\rho_i$ satisfying $\Tr[H\rho_i]\leq \omega_{x_i}$. The state $\rho_i$ is then sent to the measurement box $\msrmntBox_i$ which produces the classical value $A_i$ from it. Finally, in step 4 of our protocol, the classical value $C_i$ is produced from $X_i$ and $A_i$. We denote with
$$\C:=\{\mu_{\energyBound^{\rm opt},\delta},-(\mu_{\energyBound^{\rm opt},\delta})^{-1}\}$$ 
the alphabet of the classical registers $C_i$, and let $\P_\C$ be the set of probability distribution over $\C$. We denote the channels evolving the states in our protocol as 
\begin{align*}
\begin{split}
\mathcal{N}_i^{\delta,\energyBound}: R_{i-1} &\rightarrow R_{i}A_iX_iC_i \\
\rho_{R_{i-1}} &\mapsto \rho_{R_iA_iX_iC_i} \,.
\end{split}
\end{align*}

See Fig. \ref{fig:eat-channels} for a graphical depiction of these channels. In the following, we will sometimes omit the channels' dependence on $\delta$ and $\energyBound$ for ease of notation.

\begin{figure}[h]
  \center
  \begin{tikzpicture}[scale=1]
    \draw[rounded corners] (-3.8,-2.2) rectangle (4,1.8);
    \node[] at (-3.3,1.5) {$\mathcal{N}_i^{\delta,\mathbb{\omega}}$};
    \draw[] (-1.5,-1) rectangle (-2.5,1);
    \node[] at (-2,0) {$P_{\mathbb{\omega}}$};
    \draw[] (3.5,-1.5) rectangle (2.5,0.5);
    \node[] at (3,-0.5) {$\mathfrak{D}_i$};
    \draw[] (0.3,-0.1) rectangle (1.3,-2.1);
    \node[] at (0.75,-1){$M_i$};
    %
    %
    %
    %
	\node[] at (-0.3,-0.5){ $\rho_i$};
	\node[] at (-0.6,-1.3){ $\langle H\rangle_{\rho_i}\leq \omega_{x_i}$};

    \draw[->,thick, dashed] (-4,0)node[left] {$R_{i-1}$} to (-2.5,0);
\draw[->,out=0, in=180,thick, dashed] (-1.5,0) to (0.3,-1);
\draw[->,out=0, in=180,thick, dashed] (-1.5,0.35) to (2.5,0.35);
    \draw[->] (1.3,-1) to (2.5,-1);
    \draw[->,thick, dashed] (3.5,0) to (4.5,0) node[right]{$R_i,C_i$};
    \draw[->] (-3,0.8)node[left]{$X_i$} to (-2.5,0.8);
    \draw[->]((-1.5,0.8) -- (4.5,0.8)node[right]{${X}_i$};
    \draw[->]((1.3,-1.8) -- (4.5,-1.8)node[right]{${A}_i$};

  \end{tikzpicture}
  \caption{\raggedright Description of the channel $\mathcal{N}_i^{\delta,\energyBound}$. This channel describes the round $i$ of Protocol~\ref{protocol:rap}'s entropy accumulation part. $X_i$ is sampled from a $\delta$-SV source. Given $X_i$, $\prepBox_\energyBound$ prepares a state $\rho_i$ satisfying the energy constrain $\Tr[H\rho_i]\leq \omega_{x_i}$. Given $\rho_i$, $\msrmntBox_i$ produces the classical value $A_i$. Finally, the classical value $C_i=w(A_i,X_i)$ is produced.}
  \label{fig:eat-channels}
\end{figure}
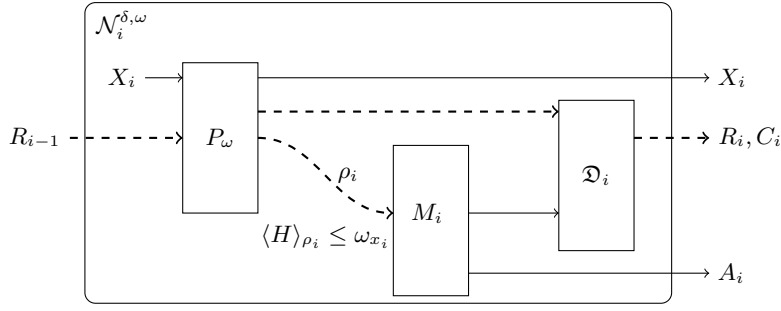

The state after the $n$ rounds of the entropy accumulation part, just before step~\ref{step:abortion} is denoted by
\begin{equation*}
\rho_{\mathbf{A}\mathbf{X}\mathbf{C}E} = \left( \textrm{Tr}_{R_{{n}}} \circ \mathcal{N}_{n} \circ \dots \circ \mathcal{N}_1 \right) \otimes \mathcal{I}_E \rho_{R_0E}
\end{equation*}
In step~\ref{step:abortion} Alice and Bob decide whether to abort the protocol or not. We denote by $\Omega$ the event of not aborting,
\begin{equation*}
\Omega = \Big \{\bar{C} \leq (I_{\mathrm{exp}} + \gamma_{\mathrm{est}}) \Big \} \, .
\end{equation*}
We denote by $\rho_{\mathbf{A}\mathbf{X}\mathbf{C}E|\Omega}$, or short $\rho_{|\Omega}$, the state after the protocol conditioned on not aborting the protocol.

\begin{lemma}
The channels $\mathcal{N}_i$ that evolve the unknown quantum state of Protocol~\ref{protocol:rap2}, are EAT channels, i.e., they satisfy Definition~\ref{def:eat-channels}.
\end{lemma}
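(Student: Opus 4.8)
The plan is to check the two defining properties of Definition~\ref{def:eat-channels} one at a time for the maps $\N_i^{\delta,\energyBound}:R_{i-1}\to R_iA_iX_iC_i$ drawn in Fig.~\ref{fig:eat-channels}, reading off each property from the construction of Protocol~\ref{protocol:rap2} and the assumptions of Sec.~\ref{sec:setting-and-assumptions}. Condition~\ref{item:eat-ch-cond-1} is essentially by design: the register $X_i$ stores the bit produced by $\SV_\delta$ and $A_i$ stores the outcome of the measurement box $\msrmntBox$, so both are binary, hence finite-dimensional, classical systems; the register $R_i$ is the quantum memory forwarded by the preparation box $\prepBox_\energyBound$ and is left arbitrary. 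Finally, step~\ref{step:last} sets $C_i=w_{\delta,\energyBound}(A_i,X_i)$ for the deterministic function $w$ of Eq.~\eqref{eq:winning-function}, so $C_i$ is a deterministic function of the classical registers $A_i,X_i$ with values in the two-element alphabet $\C$, exactly as required.

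The substance of the lemma is condition~\ref{item:eat-ch-cond-2}, the Markov chain $I(A_1\dots A_{i-1}:X_i|X_1\dots X_{i-1}E)=0$ for every $i$. Since the side information $\lambda$ is classical, I would run the whole argument at a fixed value of $\lambda$ — consistently with the single-round bounds $\Hmin(A|X,E,\lambda,\delta)$ of Appendix~\hyperref[appendix:min-entropy]{B}, which are themselves conditioned on $\lambda$ — equivalently absorbing $\lambda$ into the side-information register $E$. Under this identification the first relation in assumption~\ref{assumption:markov-chain}, $I(A_1\dots A_{i-1}:X_i|X_1\dots X_{i-1}E,\lambda)=0$, is precisely the desired chain; the second relation $I(\mathbf{Z}:\mathbf{A}|\mathbf{X}E,\lambda)=0$ concerns only the extraction phase and is deferred to the analysis of the extractor. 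To see that the sampling of $X_i$ — physically performed by the external source — fits the channel format in which $X_i$ is an \emph{output} of $\N_i$, note that the internal state of $\SV_\delta$ together with the history needed to reproduce its conditional law can be carried forward in the classical part of the memory $R_{i-1}$; assumption~\ref{assumption:no-entanglement} then guarantees that the round-$i$ map splits into a preparation acting on $R_{i-1}$ followed by a local measurement, so the $\N_i$ compose sequentially and introduce no signalling from $A_1\dots A_{i-1}$ into $X_i$ beyond what $R_{i-1}$ already carries.

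I expect the main obstacle to be the justification that assumption~\ref{assumption:markov-chain}, which is phrased as a property of the state actually produced in an execution, supplies the Markov condition \emph{for every} input state $\rho_{R_0E}$ demanded by Definition~\ref{def:eat-channels}. The point to make is that the only freedom in $\rho_{R_0E}$ is Eve's choice of initial memory and purification, and that assumption~\ref{assumption:markov-chain} is imposed uniformly over this freedom: it formalizes that, once the protocol commences, Eve can neither route the past outcomes $A_1\dots A_{i-1}$ back into the source when it emits $X_i$ nor correlate the later string $\mathbf{Z}$ with $\mathbf{A}$, all residual dependence between the device and the source being contained in the side information $\lambda$, which we take to be part of $E$. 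Granting this, condition~\ref{item:eat-ch-cond-2} holds for the composed channels and the lemma follows.
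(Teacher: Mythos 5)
Your proof is correct and takes essentially the same route as the paper: condition~\ref{item:eat-ch-cond-1} by noting $A_i,X_i$ are classical and $C_i=w_{\delta,\energyBound}(A_i,X_i)$ is a deterministic function of them, and condition~\ref{item:eat-ch-cond-2} by invoking assumption~\ref{assumption:markov-chain} directly (with $\lambda$ absorbed into $E$). The paper's own proof is just these two observations; your additional care about the ``for every input state'' quantifier and about modelling the source's sampling inside $\N_i$ is a sound elaboration of what the paper leaves implicit.
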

\begin{proof}
Condition \ref{item:eat-ch-cond-1} is satisfied because $A_i$ and $X_i$ are classical registers and $C_{i}$ is a classical function of those registers. Condition \ref{item:eat-ch-cond-2} is satisfied because, as is stated in \ref{assumption:markov-chain}, it holds that $I(A_1\dots A_{i-1}:X_i|X_1\dots X_{i-1}E,\lambda) = 0$.\qedhere
\end{proof}

\subsubsection{Min-tradeoff function}

The next step is to give a min-tradeoff function for our EAT channels. To that end, we will retort to the techniques introduced in \cite{8935370}. For fixed $\delta\in [0,1/2)$, let $\energyBound^{\rm opt}=\energyBound^{\rm avg}(\delta)$ (c.f. Eq. \eqref{eq:avg-omega}) and let
\begin{align}\label{eq:sdp-for-min-tradeoff}
h(I)&:=\max_{\{(\tilde{\prob}_{A|X,\lambda'},\energyBound^{\lambda'})\}_{\lambda'\in\{0,1\}^2}}\sum_{a,x} \tilde{p}_{(a,x)}(a|x)&\nonumber\\
&\qquad\qquad\textrm{subject to } &\nonumber\\
&\qquad\qquad\qquad\qquad \sum_{\lambda'}I^\mathrm{lower}_{\energyBound^{\rm opt},\delta}(\prob_{A|X,\lambda'})\leq I &\nonumber\\
&\qquad\qquad\qquad\qquad  \tilde{\prob}_{A|X,\lambda'}\in \tilde{\quantumSet{\energyBound^{\lambda'}}} \text{ for all } \lambda'&\nonumber\\
&\qquad\qquad\qquad\qquad  \sum_{\lambda'}\energyBound^{\lambda'}_x \leq \omega^\mathrm{opt}_x\text{ for all } x &\nonumber\\
&\qquad\qquad\qquad\qquad  \sum_{a,\lambda'} \tilde{p_{\lambda'}}(a|x) = 1.
\end{align}

Notice that from Eq. \eqref{eq:upper-bound-pguess} and Eq. \eqref{eq:sdp-upper} we have that 
$$(\frac{1}{2}+\delta)h(I)=\eta(\delta,\energyBound^{\rm opt},I)\geq p_{{\rm guess}}(A|X,E,\lambda,\delta,I).$$

Just like Eq. \eqref{eq:sdp-upper}, Eq. \eqref{eq:sdp-for-min-tradeoff} is an SDP and it has the following dual (c.f. \cite[Eq. (10)]{8935370}):

\begin{align}\label{eq:sdp-dual}
d_h(I):=& \min_{\alpha,\mathbf{\beta}}\qquad\qquad \alpha\cdot I + \mathbf{\beta}\cdot\energyBound^{\rm opt}\nonumber\\
&\text{subject to }\quad h(I')\leq \alpha\cdot I + \mathbf{\beta}\cdot\energyBound^{\rm opt}\ \ \forall\ I'\in \mathcal{A}
\end{align}
with $\mathcal{A}:=\{I\in\mathbb{R}\mid\exists \prob_{A|X}\in\quantumSet{\energyBound^{\rm opt}} : I^\mathrm{lower}_{\energyBound^{\rm opt},\delta}(\prob_{A|X})=I\}$.

\medskip

Following the techniques in \cite{8935370}, we build min-tradeoff functions from solutions to Eq. \eqref{eq:sdp-dual}. The following lemma is an adaption of \cite[Lemma 3.2]{8935370} to our setting:

\begin{lemma}\label{lemma:min-tradeoff}
For fixed $\delta$, let $(\alpha^\mathrm{opt},\mathbf{\beta}^\mathrm{opt})$ be an optimal solution to $d_h(\mdlIneq{\energyBound^{\rm opt}}^{\rm upper,opt})$ and let $\mathbf{q}^\mathrm{opt}\in\P_\C$ be such that $(\mu_{\energyBound^{\rm opt},\delta},-1/\mu_{\energyBound^{\rm opt},\delta})\cdot \mathbf{q}^\mathrm{opt}=\mdlIneq{\energyBound^{\rm opt}}^{\rm upper,opt}$. Then,
\begin{align*}
f_{\rm min}(\p):=f(\delta)-\frac{(1/2+\delta)\alpha^\mathrm{opt}}{f(\delta)\ln 2}\mathbf{c}\cdot(\p-\mathbf{q}^\mathrm{opt})
\end{align*}
with 
\begin{align*}
\mathbf{c}&=(\mu_{\energyBound^{\rm opt},\delta},-1/\mu_{\energyBound^{\rm opt},\delta})\\
\Min{f_{\rm min}}&=f_{\rm min}(\mathbf{e}_{\mu_{\energyBound^{\rm avg}(\delta),\delta}})\\
\Max{f_{\rm min}}&=f_{\rm min}(\mathbf{e}_{-1/\mu_{\energyBound^{\rm avg}(\delta),\delta}})
\end{align*}
is a min-tradeoff function for the EAT channels $\{\mathcal{N}_i^{\delta,\energyBound^{\rm avg}(\delta)}\}_i$.
\end{lemma}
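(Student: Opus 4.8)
The plan is to verify the two defining properties of a min-tradeoff function (Definition~\ref{def:fmin}) for the channels $\{\mathcal{N}_i^{\delta,\energyBound^{\rm avg}(\delta)}\}_i$: that the affine $f_{\rm min}$ lower-bounds $\inf_{\sigma:\,\N_i(\sigma)_{C_i}=\tau_{\p}}H(A_i|X_iR')$ for every $\p\in\P_\C$, and then to read off $\Min{f_{\rm min}}$ and $\Max{f_{\rm min}}$. The first step is a reduction from the abstract constraint $\N_i(\sigma)_{C_i}=\tau_{\p}$ to a single scalar constraint on the round statistics. Since $C_i=w_{\delta,\energyBound}(A_i,X_i)$ takes only the two values $\mu_{\energyBound^{\rm opt},\delta}$ and $-1/\mu_{\energyBound^{\rm opt},\delta}$ (Eq.~\eqref{eq:winning-function}), fixing the marginal $\tau_{\p}$ of $C_i$ fixes $\mathbb{E}[C_i]=\mathbf{c}\cdot\p=\mdlIneq{\energyBound^{\rm opt}}(\prob_{AX})=:I$, i.e.\ the value of the MDL expression realized on round $i$. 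Hence the infimum in Definition~\ref{def:fmin} is exactly the least conditional entropy compatible with an $\energyBound^{\rm opt}$-bounded strategy achieving MDL value $I$, and it suffices to lower-bound this quantity by $\p\mapsto f_{\rm min}(\p)$.

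For the entropy bound I would use $H(A_i|X_iR')\geq \Hmin(A_i|X_iR')=-\log_2 p_{\rm guess}$, with the purifying register $R'$ playing the role of Eve's system $E$, and then invoke Appendix~B: conditioned on MDL value $I$, $p_{\rm guess}\leq \eta(\delta,\energyBound^{\rm opt},I)=(\tfrac12+\delta)h(I)$, where $h$ is the SDP~\eqref{eq:sdp-for-min-tradeoff}. As the optimal value of a maximization SDP in which $I$ enters through the relaxation of an inequality constraint, $I\mapsto h(I)$ is concave and nondecreasing, so by SDP strong duality an optimal dual solution $(\alpha^{\rm opt},\beta^{\rm opt})$ of $d_h(\mdlIneq{\energyBound^{\rm opt}}^{\rm upper,opt})$ in Eq.~\eqref{eq:sdp-dual} supplies an affine upper bound $h(I)\leq \alpha^{\rm opt}\, I+\beta^{\rm opt}\cdot\energyBound^{\rm opt}$, valid for all feasible $I$ (with $\alpha^{\rm opt}\geq 0$) and tight at the operating point $I^\ast:=\mdlIneq{\energyBound^{\rm opt}}^{\rm upper,opt}$, where $(\tfrac12+\delta)h(I^\ast)=2^{-f(\delta)}$ by the definition of $f(\delta)$ in Eq.~\eqref{eq:min-entropy-lower-bound-general}.

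Combining these yields $H(A_i|X_iR')\geq-\log_2\!\big[(\tfrac12+\delta)(\alpha^{\rm opt}I+\beta^{\rm opt}\cdot\energyBound^{\rm opt})\big]$, a convex function of $I$ equal to $f(\delta)$ at $I^\ast$. Linearizing this convex lower bound at $I^\ast$ (the step of \cite[Lemma~3.2]{8935370}) gives an affine function of $I$ whose value at $I^\ast$ is $f(\delta)$ and whose gradient is controlled by $\alpha^{\rm opt}$; rewriting $I=\mathbf{c}\cdot\p$ and $I^\ast=\mathbf{c}\cdot\mathbf{q}^{\rm opt}$ (using that $\mathbf{q}^{\rm opt}$ satisfies $\mathbf{c}\cdot\mathbf{q}^{\rm opt}=\mdlIneq{\energyBound^{\rm opt}}^{\rm upper,opt}$) recovers exactly the claimed $f_{\rm min}(\p)$. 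Finally, since $f_{\rm min}$ is affine on the one-dimensional simplex $\P_\C$ with a strictly negative coefficient in $\mathbf{c}\cdot\p$, its extrema are attained at the two point distributions: the larger score $\mathbf{c}\cdot\mathbf{e}_{\mu}=\mu_{\energyBound^{\rm avg}(\delta),\delta}$ gives $\Min{f_{\rm min}}=f_{\rm min}(\mathbf{e}_{\mu_{\energyBound^{\rm avg}(\delta),\delta}})$ and the smaller score $\mathbf{c}\cdot\mathbf{e}_{-1/\mu}=-1/\mu_{\energyBound^{\rm avg}(\delta),\delta}$ gives $\Max{f_{\rm min}}=f_{\rm min}(\mathbf{e}_{-1/\mu_{\energyBound^{\rm avg}(\delta),\delta}})$, as claimed.

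The main obstacle is establishing global validity of the affine bound over all of $\P_\C$, not merely near $I^\ast$. Two points need care. First, for scores $\p$ that no $\energyBound^{\rm opt}$-bounded strategy can realize the infimum in Definition~\ref{def:fmin} is over the empty set and hence $+\infty$, so the inequality is binding only on the realizable range of $I$; one must check that the linearized bound stays below $-\log_2[(\tfrac12+\delta)h(I)]$ throughout that range, which is where convexity of $-\log_2$ together with the tightness of the dual affine bound at $I^\ast$ is used, and where the exact coefficient of $f_{\rm min}$ must be pinned down following \cite[Lemma~3.2]{8935370}. Second, one must justify that the single-round guessing-probability bound of Appendix~B, derived for Eve's classical side information $\lambda$ and purification $E$, transfers verbatim to the EAT register $R'$, so that the resulting scalar bound is indeed a lower bound on the von Neumann entropy appearing in Definition~\ref{def:fmin}.
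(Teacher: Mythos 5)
Your plan follows essentially the same route as the paper's proof: bounding $H(A_i|X_iR')\geq H_{\min}(A_i|X_iR')=-\log_2 p_{\rm guess}$, invoking the Appendix~B guessing-probability SDP with the dual solution $(\alpha^{\rm opt},\beta^{\rm opt})$ to get the convex bound $\tilde f_{\min}$, and then linearizing at $\mathbf{q}^{\rm opt}$ exactly as in \cite[Lemma~3.2]{8935370}, with the extrema read off at the two point distributions of the binary alphabet $\C$. In fact your write-up is somewhat more careful than the paper's on two points it treats implicitly --- the placement of the $(\tfrac12+\delta)$ factor inside the logarithm (the paper writes $-(\tfrac12+\delta)\log_2 d_h(I)$ where $-\log_2\bigl[(\tfrac12+\delta)\,d_h(I)\bigr]$ is meant) and the identification of the EAT register $R'$ with Eve's purifying system --- so no gap to report.
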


\begin{proof}
Let $(\alpha^\mathrm{opt},\mathbf{\beta}^\mathrm{opt})$ be an optimal solution to $d_h(\mdlIneq{\energyBound^{\rm opt}}^{\rm upper,opt})$. Notice that,
$$(\frac{1}{2}+\delta)(\alpha^\mathrm{opt}~\mdlIneq{\energyBound^{\rm opt}}^{\rm upper,opt}+\mathbf{\beta}^\mathrm{opt}\cdot \energyBound^{\rm opt})=f(\delta).$$

Let $I\geq\mdlIneq{\energyBound^{\rm opt}}^{\rm upper,opt}$ and let $\mathbf{q}\in\P_\C$ be such that $(\mu_{\energyBound^{\rm opt},\delta},-1/\mu_{\energyBound^{\rm opt},\delta})\cdot \mathbf{q} = I$. Let $\rho:=\{\N_i\otimes I\}(\sigma_{R_{i-1}R'})$ be such that $\rho_{C_i}=\tau_\mathbf{q}$ for some $\sigma_{R_{i-1}R'}$. Then,
\begin{align}\label{eq:non-affine-lower-bound}
H(A_i|X_i,R')_\rho &\geq H_\mathrm{min}(A_i|X_i,R')_\rho\nonumber\\
& = -\log_2 p_\mathrm{guess}(A|X,R',I)\nonumber\\
& \geq -(\frac{1}{2}+\delta) \log_2 d_h(I)\nonumber\\
& \geq -(\frac{1}{2}+\delta)\log_2 (\alpha^\mathrm{opt}~(\mu_{\energyBound^{\rm opt},\delta},-1/\mu_{\energyBound^{\rm opt},\delta})\cdot \mathbf{q}+\mathbf{\beta}^\mathrm{opt}\cdot \energyBound^{\rm opt})\nonumber\\
& =: \tilde{f}_\mathrm{min}(\mathbf{q})
\end{align}
and therefore,
$$\tilde{f}_\mathrm{min}(\mathbf{q}) \leq \inf_{\sigma_{R_{i-1}R'}: \N_i(\sigma)_{\Ci}=\tau_{\mathbf{q}}} H(A_i|X_iR')_{\N_i(\sigma)}.$$

To conclude the proof, in order to have an affine lower-bound, we follow \cite[Lemma 3.2]{8935370} and let $f_\mathrm{min}$ be the first order Taylor expansion of $\tilde{f}_\mathrm{min}$ around a point $\mathbf{q}^\mathrm{opt}$ such that $(\mu_{\energyBound^{\rm opt},\delta},-1/\mu_{\energyBound^{\rm opt},\delta})\cdot \mathbf{q}^\mathrm{opt}=\mdlIneq{\energyBound^{\rm opt}}^{\rm upper,opt}$, that is
\begin{align*}
f_{\rm min}(\p):=f(\delta)-\frac{(1/2+\delta)\alpha^\mathrm{opt}}{f(\delta)\ln 2}(\mu_{\energyBound^{\rm opt},\delta},-1/\mu_{\energyBound^{\rm opt},\delta})\cdot(\p-\mathbf{q}^\mathrm{opt})
\end{align*}
This concludes the proof.
\end{proof}

\subsubsection{Putting all together}
From Lemma \ref{lemma:min-tradeoff} and Theorem \ref{thm:EAT} we find that, either Protocol \ref{protocol:rap} aborts with probability $1-\mathrm{Pr}[\Omega]\geq 1-\epsilon_{\mathrm{EAT}}$ or the lower bound
\begin{align}\label{eq:smooth-min-entropy-lower-bound}
H^{\epsilon_\text{s}}_{\min} \left( \mathbf{A}| \mathbf{X} E \right) > n\cdot f_{\rm min}(I_\mathrm{exp} + \gamma_{\mathrm{est}})-n(\errV(f_{\rm min}) + \errK(f_{\rm min})) - \errW
\end{align}
holds. In Eq. \eqref{eq:smooth-min-entropy-lower-bound}, $I_\mathrm{exp} + \gamma_{\mathrm{est}}$ is a shorthand for any $\p$ such that $(\mu_{\energyBound^{\rm avg}(\delta),\delta},-1/\mu_{\energyBound^{\rm avg}(\delta)),\delta})\cdot\p=I_\mathrm{exp} + \gamma_{\mathrm{est}}$.

The remainder of the proof follows exactly as in the soundness proof of the DI RAP in \cite{kessler2020device}. We state the necessary lemmas (adapted to our scenario and notation) and refer the reader to \cite{kessler2020device} for the proofs.

\medskip

First, we state the definition of a quantum-proof two-source extractor in the Markov model, the extractor used in Protocol \ref{protocol:rap}.

\begin{definition}[\cite{arnon2016quantum}]
\label{def:quantum-two-source-extractor}
	A function $\mathrm{Ext}: \{0,1\}^{n} \times \{0,1\}^{d} \to \{0,1\}^m$ is a $(k_1,k_2,\epsilon)$ quantum-proof two-source extractor in the Markov model, strong in the second source, if for all sources $X_1,X_2$, and quantum side information $C$, where $I(X_1:X_2|C)=0$ and with min-entropy $H_{\text{min}}\left(X_1|C\right)\geq k_1$ and $H_{\text{min}}\left(X_2|C\right)\geq k_2$, we have
\begin{equation*}
\frac{1}{2}\| \rho_{\mathrm{Ext}(X_1,X_2)X_2C} - \rho_{U_m} \otimes \rho_{X_2C} \|_\mathrm{tr} \leq \epsilon \;. 
\end{equation*}
where $\rho_{\mathrm{Ext}(X_1,X_2)C} = \mathrm{Ext} \otimes \mathcal{I}_C \rho_{X_1X_2C}$ and $\rho_{U_m}$ is the fully mixed state on a system of dimension $2^m$. 
\end{definition}

The following lemma states that, with a suitable correction in the security parameter of the extractor, for one of the sources in 
Def. \ref{def:quantum-two-source-extractor} one can replace the lower bound to the conditional min-entropy with a lower bound to the smooth conditional min-entropy.

\begin{lemma}[\cite{kessler2020device}]
\label{lma:smooth-entropy-bound}
	Let $\mathrm{Ext} : \{0,1\}^{n} \times  \{0,1\}^{d} \to \{0,1\}^m$ be a $(k_1,k_2,\varepsilon)$ quantum-proof two-source extractor in the Markov model, strong in the source $X_i$. Then for any Markov state $\rho_{X_1X_2C}$ with $H_{\mathrm{min}}^{\varepsilon_s}(X_1|C)_{\rho} \geq k_1+\log(1/\varepsilon)+1$ and $H_{\mathrm{min}}(X_2|C)_{\rho} \geq k_2+\log(1/\varepsilon)+1$,
$$
\frac{1}{2} \Vert \rho_{\mathrm{Ext}(X_1,X_2)X_iC} - \rho_{U_m} \otimes \rho_{X_iC} \Vert_\mathrm{tr} \leq 6  \left(\varepsilon_s + \varepsilon\right) \;.
$$
\end{lemma}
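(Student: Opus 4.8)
The plan is to reduce the smooth-min-entropy hypothesis to the exact-min-entropy setting of Definition~\ref{def:quantum-two-source-extractor} and then pay for the reduction with the triangle inequality. First I would unpack the operational definition of the smooth conditional min-entropy: the assumption $H_{\mathrm{min}}^{\varepsilon_s}(X_1|C)_\rho \geq k_1+\log(1/\varepsilon)+1$ furnishes a (possibly subnormalized) state $\tilde\rho_{X_1X_2C}$ with purified distance $P(\rho,\tilde\rho)\leq\varepsilon_s$ and \emph{exact} min-entropy $H_{\mathrm{min}}(X_1|C)_{\tilde\rho}\geq k_1+\log(1/\varepsilon)+1$. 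Since $X_1$ and $X_2$ are classical, I would dephase them in $\tilde\rho$ and renormalize; pinching a classical register neither increases purified distance nor decreases the conditional min-entropy, so $\tilde\rho$ can be taken to be a normalized cq-state, the small normalization correction being one contribution to the additive $\log(1/\varepsilon)+1$ slack.

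The hard part will be that $\tilde\rho$ need not satisfy the Markov condition $I(X_1:X_2|C)=0$, so Definition~\ref{def:quantum-two-source-extractor} does not apply to it as is. To repair this I would exploit that the true state $\rho$ \emph{is} an exact quantum Markov chain $X_1-C-X_2$: by the Hayden--Jozsa--Petz--Winter structure theorem the system $C$ splits into a direct sum of tensor factors and admits an exact recovery channel $\mathcal{R}_{C\to CX_2}$ with $\rho_{X_1X_2C}=(\mathrm{id}_{X_1}\otimes\mathcal{R})(\rho_{X_1C})$. Applying the same $\mathcal{R}$ to the smoothed marginal $\tilde\rho_{X_1C}$ produces a genuinely Markov state $\hat\rho_{X_1X_2C}$ that stays $O(\varepsilon_s)$-close to $\rho$ in trace distance, since $\mathcal{R}$ is a channel and hence contractive. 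Conditioning on more of $C$ only raises $H_{\mathrm{min}}(X_1|C)$, so the $X_1$-bound is straightforward to carry through; the genuine difficulty, and the reason the hypotheses carry the additive $\log(1/\varepsilon)+1$ on \emph{both} registers, is certifying that the repaired marginals still clear the \emph{exact} thresholds $k_1$ and $k_2$ after smoothing, renormalization and recovery. I expect this step---turning closeness in purified distance back into exact-min-entropy statements for $\hat\rho$ while tracking how the errors accumulate---to be the main obstacle.

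With a Markov state $\hat\rho$ of exact min-entropies $k_1,k_2$ in hand, I would invoke Definition~\ref{def:quantum-two-source-extractor} to obtain $\frac{1}{2}\|\hat\rho_{\mathrm{Ext}(X_1,X_2)X_iC}-\rho_{U_m}\otimes\hat\rho_{X_iC}\|_{\mathrm{tr}}\leq\varepsilon$ for the strong source $X_i$, and then conclude by two applications of the triangle inequality. Since $\mathrm{Ext}$ is a fixed classical (hence CPTP) map, purified distance contracts under it and upper-bounds trace distance, so the $\mathrm{Ext}$-outputs of $\rho$ and $\hat\rho$, together with the marginals $\rho_{X_iC}$ and $\hat\rho_{X_iC}$, are all pairwise $O(\varepsilon_s)$-close in trace distance by the bounds gathered above. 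Summing the extractor error $\varepsilon$ with these smoothing and recovery contributions yields the claimed total of the form $6(\varepsilon_s+\varepsilon)$; the constant $6$ is the cost of the Markov-repair detour and of passing between purified and trace distance, in place of the clean $\varepsilon+2\varepsilon_s$ one would obtain were $\tilde\rho$ already Markov.
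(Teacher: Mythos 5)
First, a point of comparison: the paper itself contains no proof of this lemma --- it is imported verbatim from \cite{kessler2020device} (``we state the necessary lemmas \dots and refer the reader to \cite{kessler2020device} for the proofs''), so the baseline is that reference's argument, which ultimately rests on the smooth-entropy treatment of Markov-model extractors in \cite{arnon2016quantum}. Judged on its own terms, your sketch contains a genuine gap, and it sits exactly where you yourself flag ``the main obstacle.'' Your Markov-repair idea is sound as far as it goes: since $\rho$ is an \emph{exact} Markov chain, the recovery channel $\mathcal{R}_{C\to CX_2}$ has the block form ``pinch onto the block $j$, keep $C^{L}_j$, prepare $\rho^{(j)}_{C^{R}_j X_2}$,'' so applying it to \emph{any} state on $X_1C$ produces a state of Hayden--Jozsa--Petz--Winter form, hence exactly Markov; contractivity gives $\Vert\rho-\hat\rho\Vert_{\mathrm{tr}}\le\Vert\rho_{X_1C}-\tilde\rho_{X_1C}\Vert_{\mathrm{tr}}=O(\varepsilon_s)$, and data processing on the conditioning system carries the $X_1$ bound through. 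The step that fails is certifying $H_{\mathrm{min}}(X_2|C)_{\hat\rho}\ge k_2$. In $\hat\rho$ the $X_2C$ marginal is $\mathcal{R}(\tilde\rho_C)$, not $\rho_{X_2C}=\mathcal{R}(\rho_C)$, and conditional min-entropy is not continuous in trace distance: the best generic estimate is $p_{\mathrm{guess}}(X_2|C)_{\hat\rho}\le p_{\mathrm{guess}}(X_2|C)_{\rho}+\Vert\hat\rho_{X_2C}-\rho_{X_2C}\Vert_{\mathrm{tr}}\le \varepsilon\,2^{-k_2-1}+O(\varepsilon_s)$, which exceeds $2^{-k_2}$ whenever $\varepsilon_s\gg 2^{-k_2}$ --- i.e., in the entire regime where the lemma is used in Protocol 1, since $k_2\approx -d\log(\tfrac12+\delta)$ is large while $\varepsilon_s$ is a fixed security parameter. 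The additive slack $\log(1/\varepsilon)+1$ in the entropy hypotheses is a \emph{multiplicative} factor $\varepsilon/2$ at the level of guessing probabilities, and a multiplicative $\varepsilon$-slack can never absorb an additive $O(\varepsilon_s)$ error; so the program ``triangle inequality plus tracking how the errors accumulate'' provably cannot close this step.

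What is missing is a construction that leaves the $X_2$ side of the state untouched and spends the slack multiplicatively on $X_1$. For instance: build the repaired Markov state blockwise using the \emph{original} block weights $q_j$ together with the smoothed within-block $X_1C^{L}$ components (so that $p_{\mathrm{guess}}(X_2|C)$ is exactly that of $\rho$), then discard the ``bad'' blocks on which $q_j>(2/\varepsilon)\tilde q_j$ --- one checks these carry only $O(\varepsilon_s)$ weight --- so that on the surviving blocks $p_{\mathrm{guess}}(X_1|C)\le(2/\varepsilon)\,2^{-k_1-\log(1/\varepsilon)-1}=2^{-k_1}$. This is where $\log(1/\varepsilon)+1$ is actually consumed, and it is in this multiplicative, bad-event-truncation form that the smooth-entropy version of the Markov-model extractor theorem is obtained in the literature this lemma comes from. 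Your sketch correctly isolates the crux but proposes to overcome it with continuity estimates that do not suffice; the constant $6(\varepsilon_s+\varepsilon)$ is not mere bookkeeping for a detour, it is the price of precisely this missing argument.
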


The following lemma states the secrecy of the output of the extractor when the inputs to it are the output of the entropy accumulation part of Protocol \ref{protocol:rap} and the additional string $\mathbf{Z}$ of length $d$ coming from the SV source.

\begin{lemma}[\cite{kessler2020device}]
\label{lma:RAP-secrecy}
Let $\mathrm{Ext}:\{0,1\}^{2n}\times\{0,1\}^d\rightarrow\{0,1\}^m$ be a $(k_1,k_2,\varepsilon_{ext})$ be a two-source quantum-proof extractor in the Markov model, strong in the second input, such that
\begin{equation} \label{eq:ext_k_values}
\begin{split}
k_1 &\leq n \cdot f_{\rm min}(I_\mathrm{exp} - \gamma_{\mathrm{est}})-n(\errV(f_{\rm min}) + \errK(f_{\rm min})) - \errW(f_{\rm min}) -\log(1/\varepsilon_{ext})-1 \\
k_2 &\leq -d \cdot \log(\frac{1}{2}+\delta) -\log(1/\varepsilon_{ext})-1 
\end{split}
\end{equation}
Consider Protocol~\ref{protocol:rap} using $\mathrm{Ext}$ and any $\varepsilon_{\mathrm{EAT}}, \varepsilon_{\mathrm{s}} \in (0,1)$. 
hen, either the protocol aborts with probability greater than $1 - \varepsilon_{\mathrm{EAT}}$, or for the $m$-bit output $\mathbf{K}$ together with the whole information the adversary possibly has access to, $\Sigma = \mathbf{Z} \mathbf{X} E \lambda$, it holds that
\begin{equation*}
\frac{1}{2} \left\Vert \rho_{\mathbf{K} \Sigma} - \rho_{U_m} \otimes \rho_{\Sigma} \right\Vert_\mathrm{tr} \leq 6 \left(\varepsilon_{\mathrm{s}} +  \varepsilon_{\mathrm{ext}} \right) \,.
\end{equation*}
\end{lemma}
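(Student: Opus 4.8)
The plan is to glue together three ingredients already in place: the entropy-accumulation lower bound on the accumulated string $\mathbf{A}$, the Santha--Vazirani min-entropy of the second string $\mathbf{Z}$, and the smooth-min-entropy form of the Markov-model two-source extractor (Lemma~\ref{lma:smooth-entropy-bound}).

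First I would bound the entropy of $\mathbf{A}$. Feeding the min-tradeoff function $f_{\rm min}$ of Lemma~\ref{lemma:min-tradeoff} and the EAT-channel structure $\{\N_i\}$ into Theorem~\ref{thm:EAT}, with the non-abort event $\Omega=\{\bar{C}\leq I_\mathrm{exp}+\gamma_{\mathrm{est}}\}$, yields the dichotomy behind Eq.~\eqref{eq:smooth-min-entropy-lower-bound}: either $\pr{\Omega}\leq\varepsilon_{\mathrm{EAT}}$, in which case the protocol aborts with probability greater than $1-\varepsilon_{\mathrm{EAT}}$ and the claim holds vacuously, or $\Hmin^{\varepsilon_{\mathrm{s}}}(\mathbf{A}\mid\mathbf{X}E)_{\rho_{|\Omega}}$ is bounded below by the right-hand side of Eq.~\eqref{eq:smooth-min-entropy-lower-bound}. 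By the hypothesis on $k_1$ in Eq.~\eqref{eq:ext_k_values}, that quantity is at least $k_1+\log(1/\varepsilon_{\mathrm{ext}})+1$, so $\Hmin^{\varepsilon_{\mathrm{s}}}(\mathbf{A}\mid\mathbf{X}E)_{\rho_{|\Omega}}\geq k_1+\log(1/\varepsilon_{\mathrm{ext}})+1$.

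Next I would bound the second source. Since $\mathbf{Z}$ of length $d$ is drawn from the $\delta$-SV source, the SV condition gives each bit conditional probability at most $1/2+\delta$, hence $\max_{\mathbf{z}}p(\mathbf{z}\mid E\mathbf{X}\lambda)\leq(1/2+\delta)^d$, i.e. $\Hmin(\mathbf{Z}\mid E\mathbf{X}\lambda)\geq -d\log(1/2+\delta)$, which by the $k_2$ hypothesis is at least $k_2+\log(1/\varepsilon_{\mathrm{ext}})+1$. The Markov requirement $I(\mathbf{A}:\mathbf{Z}\mid \mathbf{X}E\lambda)=0$ demanded by Definition~\ref{def:quantum-two-source-extractor} is provided verbatim by the second half of assumption~\ref{assumption:markov-chain}, namely $I(\mathbf{Z}:\mathbf{A}\mid\mathbf{X}E,\lambda)=0$. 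With both min-entropy bounds and the Markov condition secured, I would invoke Lemma~\ref{lma:smooth-entropy-bound} taking $X_1=\mathbf{A}$ (the source for which the extractor is not strong, carrying the smooth min-entropy) and $X_2=\mathbf{Z}$ (the strong source). This converts the two bounds into $\frac{1}{2}\Vert\rho_{\mathbf{K}\Sigma}-\rho_{U_m}\otimes\rho_{\Sigma}\Vert_\mathrm{tr}\leq 6(\varepsilon_{\mathrm{s}}+\varepsilon_{\mathrm{ext}})$ with $\mathbf{K}=\mathrm{Ext}(\mathbf{A},\mathbf{Z})$ and $\Sigma=\mathbf{Z}\mathbf{X}E\lambda$, which is exactly the asserted secrecy bound.

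The main obstacle is the bookkeeping around conditioning on $\Omega$. The EAT bound concerns the conditioned state $\rho_{|\Omega}$, whereas the second source's min-entropy and the extractor's Markov hypothesis are most natural for the unconditioned state; I would need to verify that conditioning on $\Omega$---a deterministic function of the classical scores $C_1,\dots,C_n$ and hence of $(\mathbf{A},\mathbf{X})$---preserves both $I(\mathbf{A}:\mathbf{Z}\mid \mathbf{X}E\lambda)=0$ and $\Hmin(\mathbf{Z}\mid E\mathbf{X}\lambda)$, up to corrections absorbed into $\varepsilon_{\mathrm{s}}$ and the additive constants of Lemma~\ref{lma:smooth-entropy-bound}. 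Once this is settled the three ingredients compose directly, and the leftover additive $\varepsilon_{\mathrm{EAT}}$ simply records the branch of the dichotomy in which the protocol does not abort with high probability.
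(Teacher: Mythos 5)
Your proposal is correct and is essentially the paper's own argument, which the appendix assembles in exactly this way (deferring the individual lemma proofs to \cite{kessler2020device}): the EAT dichotomy with the min-tradeoff function of Lemma~\ref{lemma:min-tradeoff} bounds $\Hmin^{\epsilon_s}(\mathbf{A}|\mathbf{X}E)_{\rho_{|\Omega}}$, the SV condition gives $\Hmin(\mathbf{Z}|\mathbf{X}E\lambda)\geq -d\log(1/2+\delta)$, assumption~\ref{assumption:markov-chain} supplies the Markov condition, and Lemma~\ref{lma:smooth-entropy-bound} with $X_1=\mathbf{A}$, $X_2=\mathbf{Z}$ composes them into the stated bound. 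The conditioning issue you flag resolves as you suspect---given $\mathbf{X}E\lambda$ the event $\Omega$ is a function of $\mathbf{A}$ alone, so $I(\mathbf{Z}:\mathbf{A}|\mathbf{X}E\lambda)=0$ implies conditioning on $\Omega$ leaves both the Markov condition and $\Hmin(\mathbf{Z}|\mathbf{X}E\lambda)$ intact---while the remaining wrinkle, that Eq.~\eqref{eq:ext_k_values} uses $f_{\rm min}(I_\mathrm{exp}-\gamma_{\mathrm{est}})$ whereas the EAT bound of Eq.~\eqref{eq:smooth-min-entropy-lower-bound} yields $f_{\rm min}(I_\mathrm{exp}+\gamma_{\mathrm{est}})$, is a sign-convention typo present in the paper itself (inherited from the reversed abort condition of \cite{kessler2020device}) rather than a defect of your argument.
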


Finally, putting everything together we have the proof of the secrecy part of Theorem \ref{thm:soundness-and-completeness}.

\begin{proof}[Proof of secrecy \cite{kessler2020device}.]
In the following let $\Sigma = \mathbf{Z} \mathbf{X} E \lambda$ be the whole information the adversary has access to.
Starting with Lemma~\ref{lma:RAP-secrecy} we can distinguish two cases.
\begin{enumerate}
\item The protocol aborts with probability greater than $1 - \varepsilon_{\mathrm{EAT}}$.
In that case, we find 
$$
\frac{1}{2}\left( 1 - \mathrm{Pr}[\text{abort}] \right) \left\Vert \rho_{\mathbf{K} \Sigma} - \rho_{U_{m}} \otimes \rho_{\Sigma} \right\Vert_\mathrm{tr} \leq \frac{1}{2} \varepsilon_{\mathrm{EAT}} \left\Vert \rho_{\mathbf{K} \Sigma} - \rho_{U_{m}} \otimes \rho_{\Sigma} \right\Vert_\mathrm{tr} \\
	\leq \varepsilon_{\mathrm{EAT}} \;,
$$
\item The protocol aborts with probability less than $1 - \varepsilon_{\mathrm{EAT}}$.
In that case, using the bound from Lemma~\ref{lma:RAP-secrecy}, we find	
$$
\frac{1}{2}\left( 1 - \mathrm{Pr}[\text{abort}] \right) \left\Vert \rho_{\mathbf{K} \Sigma} - \rho_{U_{m}} \otimes \rho_{\Sigma} \right\Vert_\mathrm{tr} \leq \frac{1}{2}\left\Vert \rho_{\mathbf{K} \Sigma} - \rho_{U_{m}} \otimes \rho_{\Sigma} \right\Vert_\mathrm{tr} \leq  6\left(\varepsilon_{\mathrm{s}} +\varepsilon_{\mathrm{ext}} \right) \,. 
$$
\end{enumerate}
Hence,
$$
\frac{1}{2}(1-\Pr[\mathrm{Abort}])||\rho_{\mathbf{K}\Sigma}-\rho_{U^m}\otimes\rho_{\Sigma}||_\mathrm{tr} \leq 6 \left(\varepsilon_{\mathrm{s}} +\varepsilon_{\mathrm{ext}} \right) + \varepsilon_{\mathrm{EAT}}\;.
$$
\end{proof}

\subsection{Completeness}
\begin{lemma}[Completeness]
\label{lma:completeness}
Let $\SV_\delta$ be any $\delta$-SV source and let $I_\mathrm{exp} \leq \mdlIneq{\energyBound^{\rm avg}}^{\rm upper,opt}$. Then Protocol~\ref{protocol:rap} is complete with completeness parameter $\varepsilon_{\mathrm{c}} \leq \exp \left( - \frac{2 n \mu_{\energyBound,\delta}^2\gamma_{\mathrm{est}}^2}{\left (1+\mu_{\energyBound,\delta}^2 \right )^2} \right)$;
i.e., the probability to abort in an honest implementation is upper bounded by $\varepsilon_{\mathrm{c}}$.
\end{lemma}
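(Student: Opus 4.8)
The plan is to exhibit an explicit i.i.d.\ honest device and then control the abort probability by a Hoeffding bound, exactly as anticipated in the proof sketch of Theorem~\ref{thm:soundness-and-completeness} (writing $\energyBound=\energyBound^{\rm avg}(\delta)$ throughout). First I would fix the energy bound $\energyBound^{\rm avg}(\delta)$ of Eq.~\eqref{eq:avg-omega} and take a quantum behaviour $\prob^{\rm opt}_{A|X}\in\quantumSet{\energyBound^{\rm avg}(\delta)}$ attaining the optimum $\mdlIneq{\energyBound^{\rm avg}}^{\rm upper,opt}$ of the SDP in Appendix~B; since that optimisation ranges over genuine quantum realisations there exist a preparation $\{\rho^x\}_x$ with $\Tr[H\rho^x]\le\omega^{\rm avg}_x$ and a measurement $\{M_a\}_a$ reproducing $\prob^{\rm opt}_{A|X}$, and by the $x\leftrightarrow 1-x$, $a\leftrightarrow 1-a$ symmetry of the MDL expression this behaviour may be taken symmetric, i.e.\ $p(0|0)=p(1|1)=:q$. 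The honest implementation $D_{\energyBound^{\rm avg}(\delta)}$ repeats this preparation--measurement independently in every round with fresh internal randomness and keeps no ancillary system, so it trivially satisfies assumptions \ref{assumption:avg-energy}--\ref{assumption:no-entanglement}.

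Next I would identify the statistics of the scores $C_i=w_{\delta,\energyBound}(A_i,X_i)$ of Eq.~\eqref{eq:winning-function}. For the symmetric device the event $\{A_i=X_i\}$ has probability $q$ \emph{irrespective of the value of $X_i$}, so each $C_i$ is independent of the input $X_i$ and, because distinct rounds use independent internal randomness, the $C_1,\dots,C_n$ are i.i.d.\ random variables taking the value $\mu_{\energyBound,\delta}$ with probability $q$ and $-\mu_{\energyBound,\delta}^{-1}$ with probability $1-q$. Their common mean is $\bar c:=\mu_{\energyBound,\delta}q-\mu_{\energyBound,\delta}^{-1}(1-q)=\mdlIneq{\energyBound}(\prob^{\rm opt}_{AX})$, which a direct comparison of the two expressions shows satisfies $\bar c\le\mdlIneq{\energyBound^{\rm avg}}^{\rm upper}(\prob^{\rm opt}_{A|X})=\mdlIneq{\energyBound^{\rm avg}}^{\rm upper,opt}$, the gap being $2\delta(\mu_{\energyBound,\delta}q+\mu_{\energyBound,\delta}^{-1}(1-q))\ge0$. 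As $I_\mathrm{exp}$ is an upper bound on the honest expected violation lying at most $\mdlIneq{\energyBound^{\rm avg}}^{\rm upper,opt}$, we may assume $\bar c\le I_\mathrm{exp}$.

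The protocol aborts iff $\bar C:=\tfrac1n\sum_i C_i>I_\mathrm{exp}+\gamma_{\mathrm{est}}$, and since $\bar c\le I_\mathrm{exp}$ this event is contained in $\{\bar C-\bar c>\gamma_{\mathrm{est}}\}$. The $C_i$ are i.i.d.\ and confined to an interval of width $\mu_{\energyBound,\delta}+\mu_{\energyBound,\delta}^{-1}=(1+\mu_{\energyBound,\delta}^2)/\mu_{\energyBound,\delta}$, so Hoeffding's inequality gives $\Pr[\mathrm{Abort}]\le\exp\!\bigl(-2n\gamma_{\mathrm{est}}^2\mu_{\energyBound,\delta}^2/(1+\mu_{\energyBound,\delta}^2)^2\bigr)$, which is precisely the claimed $\varepsilon_{\mathrm{c}}$.

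The only genuinely delicate point is the uniform bound $\mathbb{E}[C_i\mid\text{history}]\le I_\mathrm{exp}$ used to place the abort event inside a one-sided large-deviation event. For the symmetric device this is automatic because the score is input-independent; for a general honest behaviour, however, the conditional mean equals $\sum_x p(x\mid\mathcal{F}_{i-1})g(x)$ with $g(x)=\mu_{\energyBound,\delta}p(x|x)-\mu_{\energyBound,\delta}^{-1}p(1-x|x)$, and the measurement-dependent SV weights $\tfrac12-\delta\le p(x\mid\mathcal{F}_{i-1})\le\tfrac12+\delta$ must be dominated uniformly over every history. This is exactly what the relaxation $\mdlIneq{\energyBound}(\prob_{AX|\lambda})\le\mdlIneq{\energyBound}^{\rm upper}(\prob_{A|X,\lambda})$ of Appendix~B provides, and in that non-i.i.d.\ case one replaces Hoeffding by the Azuma--Hoeffding inequality applied to the martingale $\sum_i(C_i-\mathbb{E}[C_i\mid\mathcal{F}_{i-1}])$, which yields the same constant $(1+\mu_{\energyBound,\delta}^2)^2/\mu_{\energyBound,\delta}^2$ from the spread of the two values of $w_{\delta,\energyBound}$.
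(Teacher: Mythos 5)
Your proposal is correct and takes essentially the same approach as the paper: an honest i.i.d.\ implementation whose per-round scores $C_i$ take values in $\{\mu_{\energyBound,\delta},-\mu_{\energyBound,\delta}^{-1}\}$ with mean at most $I_{\mathrm{exp}}$, followed by a one-sided Hoeffding bound with range $\mu_{\energyBound,\delta}+\mu_{\energyBound,\delta}^{-1}=(1+\mu_{\energyBound,\delta}^2)/\mu_{\energyBound,\delta}$, which reproduces exactly the claimed exponent. The paper's proof is a two-line version of yours, simply asserting a device with $\mathbb{E}[\bar{C}]=I_{\mathrm{exp}}$; your symmetrization (making the score distribution independent of the unknown SV input distribution, so the mean is well defined without knowing $\prob_{X}$) and the closing Azuma--Hoeffding remark are refinements the paper leaves implicit.
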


\begin{proof}
If we implement our device to perform $n$ independent MDL-like experiments with states and measurements achieving a value $I_{\mathrm{exp}}$ of inequality Eq. \ref{eq:mdl-inequality}, the expectation value of $\bar{C} = \frac{1}{n} \sum_j C_j$ is given by $\mathbb{E}[\bar{C}] = I_{\mathrm{exp}}$. Using Hoeffding's inequality we get the following upper bound on the probability that the protocol aborts:
\begin{align*}
\mathrm{Pr}\left[ \mathrm{aborting} \right] &= \mathrm{Pr}\left[ \bar{C} > (I_\mathrm{exp} + \gamma_{\mathrm{est}}) \right] \\
&= \mathrm{Pr}\left[ \bar{C}-I_\mathrm{exp} > \gamma_{\mathrm{est}} \right] \\
&\leq \exp \left( - \frac{2 n \mu_{\energyBound,\delta}^2\gamma_{\mathrm{est}}^2}{\left (1+\mu_{\energyBound,\delta}^2 \right )^2} \right) \, . \qedhere
\end{align*}
\end{proof}

\end{widetext}
\end{document}